\title{A Game Model for Proofs with Costs}
\titlerunning{Derivations with Costs}
\author{Timo Lang\inst{1} and Carlos Olarte\inst{2} and Elaine Pimentel\inst{2}\\
 and Christian G. Ferm\"{u}ller\inst{1}
\thanks{Olarte and Pimentel are funded by CNPq, CAPES and the project FWF START Y544-N23. Lang is supported by FWF project W1255-N23.}}
\institute{
TU-Wien, Austria \and
Universidade Federal do Rio Grande do Norte,  Brazil}
\authorrunning{Lang, Olarte, Pimentel and Ferm\"{u}ller}
\begin{document}

\maketitle

\begin{abstract}

We look at substructural calculi from a game semantic point of view,
guided by certain intuitions about resource conscious and, more
specifically, cost conscious reasoning. To this aim, we start with a
game, where player \I
defends a claim corresponding to a (single-conclusion) sequent, while
player \II tries to refute that claim. Branching rules for additive
connectives are modeled by choices of \II, while branching for
multiplicative connectives leads to splitting the game into parallel
subgames, all of which have to be won by player \I to succeed. The game
comes into full swing by adding cost labels to assumptions, and a corresponding budget.
 Different proofs of the same end-sequent are interpreted as
more or less expensive strategies for \I to defend the corresponding claim.
This leads to a new kind
of labelled calculus, which can be seen as a fragment of \SELL\ (subexponential linear logic).  
Finally, we generalize the
concept of costs in proofs by using a semiring
structure, illustrate our interpretation by examples and investigate
some proof-theoretical properties.

 \end{abstract}

\section{Introduction}

Various kinds of game semantics have been introduced to characterize computational features of
substructural logics, in particular fragments and variants of linear logic (\LL) \cite{girard87tcs}. This line of research
can be traced back to the works of   Blass~\cite{DBLP:journals/apal/Blass92,DBLP:journals/igpl/Blass97}, Abramsky and Jagadeesan \cite{DBLP:journals/jsyml/AbramskyJ94},
Hyland and Ong \cite{Hyland93fairgames}, Lamarche \cite{DBLP:conf/lics/Lamarche95}, Japaridze \cite{DBLP:journals/apal/Japaridze97}, Melli\`{e}s \cite{DBLP:conf/lics/Mellies05}, Delande et al.~\cite{DBLP:journals/apal/DelandeMS10},
among several others. 

Our particular view of game semantics is that it is not just a technical tool for characterizing provability in certain calculi,
but rather a playground for illuminating specific semantic intuitions underlying certain
proof systems. 
Specially, we aim at a better understanding of \emph{resource conscious} reasoning, 
which is often cited as a motivation for substructural logics.

In a first step, we characterize a version of linear logic (exponential-free affine inuitionistic linear logic~$\mathbf{aIMALL}$, or, equivalently, Full Lambek Calculus
with exchange and weakening $\FLew$) by a game, where the difference between additive and multiplicative
connectives is modeled as sequential versus parallel continuation in game states that directly 
correspond to sequents.
 More precisely, every branching rule for a multiplicative connective
  corresponds to a game rule that splits the
current run of the game into two independent subgames. Player \I, who seeks to establish the
validity of a given sequent, has to win all the resulting subgames. In contrast,
a branching rule for an additive connective is modeled by a choice of player \II between
two possible succeeding game states,  corresponding to the  premises of the sequent rule in question. 
Note that this amounts to a deviation from the paradigm ``formulas as games'', underlying
the game semantic tradition initiated by Blass~\cite{DBLP:journals/apal/Blass92}. Our games are, at least structurally,
closer to Lorenzen's game for intuitionistic logic~\cite{Lorenzen1960-LORLUA}, where a state roughly corresponds to a situation
in which a proponent seeks to defend a particular statement against attacks from an opponent, who,
in general, has already granted a bunch of other statements.  This kind of semantics
for linear logic (but without the sequential/parallel distinction) was first explored in  \cite{DBLP:conf/tableaux/FermullerL17}. 

As long as we only care about the existence of winning strategies, the distinction between sequential and
parallel subgames 
is
redundant.
However, our model not only highlights the
intended semantics, but it also has concrete effects once we introduce \emph{prices} for
resources (represented by formulas) into the game. This is done via 
unary operators $\pnbang{a}$ and $\vnbang{a}$, $a\in\real^+$, which share some characteristic features with \emph{subexponentials} in \LL\  (\SELL~\cite{danos93kgc,nigam09ppdp}). The intuition is that a formula $\vnbang{a}A$ is a \emph{single use resource with price $a$}: By paying $a$, we can ``unpack''~$\vnbang{a}A$ to obtain the formula~$A$, and~$\vnbang{a}A$ is destroyed in the process. On the other hand,~$\pnbang{a}A$ denotes a \emph{permanent resource}: From~$\pnbang{a}A$ we can obtain~$A$ as often as we want, each time paying the price~$a$.
We lift our game to the extended language by enriching game states with a \emph{budget} that is decreased whenever
a price is paid. Different strategies for proving the same endsequent can then be compared by the budget which they require to be run safely, i.e. without getting into debts.
This form of resource consciousness not only enhances the game, but it also translates
into a novel sequent system, where cost bounds for proofs are attached as labels to sequents. 

We observe that, up to this point, we only considered resources in {\em assumptions}. This is translated to sequents by restricting  {\em negatively} the occurrences of the modalities $\pnbang{a}$ and $\vnbang{a}$. Thus a promotion rule is not present and the proof-theoretic properties of the proposed systems, such as cut-elimination, can be mimicked by the ones of $\mathbf{aIMALL}$. 
We hence move towards two possible generalizations. First, we propose a  broader notion of cost and prices (for both the game and corresponding calculi) beyond the domain of the non-negative real numbers. For this, we organize the labels/prices in a semiring structure that enables for the instantiation of several interesting concrete examples, having the same game-theoretic characterization. Second, we discuss the quest of allowing modalities also in positive contexts, showing the limitations of such approach.

\noindent{\em Organization and contributions. } Section~\ref{sec:pre} defines the basic game for $\mathbf{aIMALL}$ and establishes the correspondence between winning strategies and proofs. 
Section \ref{sec:costs} introduces the concept of prices and budgets into the game. 
The existence of cost-minimal strategies is shown in Section \ref{sec:spectrum} and cut-admissibility is discussed in Section \ref{sec:cut}. In Section~\ref{sec:general}, the concept of prices is generalized and several examples of our interpretation of costs in proofs are presented. 
In Section \ref{section:pos}, the challenge of extending the semantics to full subexponential linear logic is discussed. Section~\ref{sec:conc} concludes the paper.  
\section{A game model of branching}
\label{sec:pre}
Our starting point is a calculus for \emph{affine intuitionistic linear logic without exponentials} ($\mathbf{aIMALL}$)~\cite{girard87tcs}, whose calculus is also equivalent to \FLew, the \emph{Full Lambek calculus with exchange and weakening}. We denote this calculus simply by \aIMALL for brevity. Formulas in~\aIMALL are built from the grammar 

\[
 A ::= p \mid \zero \mid \one  
 \mid A_1 \lolli A_2
\mid   A_1 \tensor A_2 \mid A_1 \with A_2  \mid A_1 
 \oplus A_2 .
\]

\noindent
where $p$ stands for atomic propositions (variables); $\zero/\one$ are the false/true units;  $\limp$ denotes linear implication;  $\otimes/\with$ are the multiplicative/additive conjunctions; and $\oplus$ is the additive disjunction. 

We shall use $A,B,C$ (resp. $\Gamma,\Delta$) to range over formulas (resp. multisets of formulas).  The rules 
are in Fig.~\ref{fig:ll}. Note that the cut rule is not included in our presentation of $\aIMALL$ and that weakening is present only implicitly, via the context $\Gamma$ in the initial sequents. Furthermore, in rule $I$, $p$ is a propositional variable.
We shall write $\vdash_\aIMALL S$ if the sequent $S$ is provable in $\aIMALL$.
\begin{figure}[t]
\resizebox{\textwidth}{!}{
$
\begin{array}{c}
\hline\mbox{Sequent System for }\aIMALL\\\hline\\
 \infer[\tensor_L]{\Gamma, A \tensor B \lra C}
{\Gamma, A, B \lra C} 
\quad 
\infer[\tensor_R]{\Delta_1, \Delta_2 \lra A \tensor B}
{\Delta_1 \lra A & \Delta_2 \lra B}
\quad
\infer[\lolli_L]{\Delta_1, \Delta_2, A \lolli B \lra C}
{\Delta_1 \lra A & \Delta_2, B \lra C}
\quad 
\infer[\lolli_R]{\Gamma \lra A \lolli B}{\Gamma, A \lra B}
\\\\
 \infer[\with_{L_i}]{\Gamma, A_1 \with A_2 \lra B}
{\Gamma, A_i\lra B} 
\quad 
\infer[\with_R]{\Gamma \lra A \with B}
{\Gamma \lra A & \Gamma \lra B}
\quad
\infer[\oplus_L]{\Gamma, A \oplus B \lra C}
{\Gamma, A \lra C & \Gamma, B \lra C}
\quad 
\infer[\oplus_{R_i}]{\Gamma \lra A_1 \oplus A_2}{\Gamma \lra A_i}
\\\\
\infer[I]{\Gamma,p \lra p}{} 
 \qquad
\infer[\one_R]{\Gamma \lra \one}{}
\qquad 
\infer[\zero_L]{ \Gamma,\zero\lra A}{}
\\\\
\hline\mbox{Sequent System for }\aIMALLR\\\hline\\
\infer[\tensor_R]{\pnbang{}\Gamma,\Delta_1, \Delta_2 \lra A \tensor B}
{\pnbang{}\Gamma,\Delta_1 \lra A & \pnbang{}\Gamma,\Delta_2 \lra B}
\quad
\infer[\lolli_L]{\pnbang{}\Gamma,\Delta_1, \Delta_2, A \lolli B \lra C}
{\pnbang{}\Gamma,\Delta_1 \lra A & \pnbang{}\Gamma,\Delta_2, B \lra C}
\\\\
\infer[\pnbang{}_L]{\Gamma,\pnbang{a}A\lra C}{\Gamma,\pnbang{a}A,A\lra C}
\qquad
\infer[\vnbang{}_L]{\Gamma,\vnbang{a}A\lra C}{\Gamma,A\lra C}
\end{array}
$}\caption{Sequent systems $\aIMALL$ and $\aIMALLR$}
\label{fig:ll}
\end{figure}

We shall  characterize \aIMALL proofs as winning strategies (w.s.) in a certain game. Usually, one can interpret bottom-up proof search in sequent systems as a  game, where  at any given state, player \I first 
chooses a formula of a sequent and, in the next step,  either \I
moves to the premise sequent of the corresponding introduction rule (if the rule has
only one premise); or player \II
chooses 
a premise sequent in which the game continues (if the rule has more than one premise).
Alternatively, 
rather than letting player \II choose the subgame,
one may stipulate that the game splits into independent subgames, all of
which player \I has to win. 
At first glance, these two approaches 
might seem  different. However, the difference is only of interpretation and it does not affect 
the (non-)existence of w.s.'s for~\I. To see this, note that, by definition of a w.s.,
player \I has to be prepared to answer to every possible choice of her opponent~\II. 
Therefore, it
does not matter whether we require \I to actually win every subgame 
or whether we image \I to play a single run
where she wins 
irrespectively of \II's  choices. 
Hence, the two  interpretations are equivalent in terms of \I's w.s.'s but 
they provide different viewpoints of branching sequent rules. Going more into detail, we can see that this equivalence holds as long as the parallel games are \emph{independent}. We will break this independence later on by introducing a budget which is shared among parallel games (see Section \ref{sec:costs}).

The distinguishing feature of the game 
\GAIMALL  below is: 
branching
in {\em additive rules} is modeled as choices of \II, whereas in branching {\em multiplicative rules},  \I   splits the context into two disjoint parts,
which then form the corresponding contexts of two  subgames to
be played in parallel. Consequently, a state of the game is represented
by a \textit{multiset of sequents}, each belonging to a separate subgame. 

\begin{definition}[The game \GAIMALL]\label{definition:GAILL}
\GAIMALL is a game of two players, \I and \II. Game states (denoted by  $G,H$) are finite multisets of sequents.
\GAIMALL proceeds in rounds, initiated by \I's selection of a sequent $S$ from the current game state. The successor state is determined according to rules that fit one of the  following schemes:

\[
\begin{array}{lllll}
{(1)} &G\cup\{S\}&\quad\leadsto\quad&  \quad G\cup\{S'\} & \qquad \\
{(2)} &G\cup\{S\}&\quad\leadsto\quad&  \quad G\cup\{S_1\}\cup\{S_2\} & \qquad 
\end{array}
\]

\noindent In (1), the subgame $S$ changes to $S'$.
In (2), the subgame $S$ splits into two subgames $S_1$ and $S_2$. Here is the complete description of a round: After \I has chosen a sequent $S$ among the current game state, she chooses a principal formula in $S$ and a matching rule instance $r$ of \aIMALL such that $S$ is the conclusion of that rule. Depending on $r$, the round proceeds as follows:
\begin{enumerate}
\item If $r$ is a unary rule with premise $S'$, then the game proceeds in the game state $G\cup\{S'\}$ (no interaction of \II is required).
\item {\bf Parallelism:} If $r$ is a binary rule with premises 
$S_1,S_2$
pertaining to a \emph{multiplicative} connective, then the game proceeds in the game state $G\cup\{S_1\}\cup\{S_2\}$ (again, no interaction of \II is required).
\item {\bf \II-choice:} If $r$ is a binary rule with premises $S_1,S_2$ pertaining to an \emph{additive} connective, then \II chooses $S'\in\{S_1,S_2\}$ and the game proceeds in the game state $G\cup\{S'\}$.
\end{enumerate}

\noindent A {\bf winning state} (for \I) is a game state consisting of initial sequents of $\aIMALL$ only, 
that is, sequents having one of the forms $(\Gamma,p\lra p)$, $(\Gamma,\zero\lra A)$, $(\Gamma\lra\one)$. 
\end{definition}

\begin{example}
As an example of a round in $\GAIMALL$, assume that the game starts with 
$
\Delta\lra A\tensor B
$. 
 \I might select $A\tensor B$ as the principal formula. For the choice of a matching instance of the rule $\tensor_R$, she also has to choose a partition $\Delta=\Delta_1\cup\Delta_2$. The game then continues in the state
 $
\{(\Delta_1\lra A),(\Delta_2\lra B)\}
$. 
\end{example}
The following definitions are standard in game theory.
\begin{definition}[Plays and strategies]
A {\bf play} of \GAIMALL on a game state $H$ is a sequence $H_1,H_2,\ldots,H_n$ of game states, where $H_1=H$ and each $H_{i+1}$ arises by playing one round on $H_i$.
 A {\bf strategy} (for \I) on a game state $H$ is defined as a function telling \I how to move in any given state. 
 A strategy on $H$ is a {\bf winning strategy (w.s.)} if all plays following it eventually reach a winning state.
We shall write $\wins{}{\GAIMALL} H$ if $\I$ has a w.s. on the game state $H$. 
\end{definition}

Given w.s.'s $\pi_1,\ldots,\pi_n$ for sequents $S_1,\ldots,S_n$, there is an obvious w.s. for the game state $\{S_1,\ldots,S_n\}$ which could be specified as ``play according to $\pi_i$ in the subgame $S_i$''. Not all w.s.'s for $\{S_1,\ldots,S_n\}$ need to arise in such a way though, since in principle it is allowed that moves in a subgame $S_i$ depend on the moves in another subgame $S_j$. Nevertheless, since in the game $\GAIMALL$ valid moves and the winning conditions in all subgames are independent, we can restrict to strategies of the former kind. This observation is encapsulated as follows. 
\begin{lemma}[Independence
]
\label{lemma:independency}
$\wins{}{\GAIMALL}\{S_1,\ldots,S_n\}\quad\iff \quad\text{for all $i\leq n$, }\wins{}{\GAIMALL}S_i$
\end{lemma}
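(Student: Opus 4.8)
The plan is to prove both directions of the equivalence separately, with the forward direction being essentially a restriction argument and the backward direction a construction that combines strategies componentwise. Throughout, I would lean on the structural observation already emphasized in the text: in $\GAIMALL$, the legal moves available on a subgame $S_i$ and the condition for $S_i$ to be a winning sequent depend only on $S_i$ itself, not on the other components of the game state. This independence is what makes both directions go through, so I would state it as a standing fact before starting the two implications.

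For the backward direction ($\Leftarrow$), suppose I am given winning strategies $\pi_1,\ldots,\pi_n$ for the singleton game states $S_1,\ldots,S_n$. I would define a strategy $\pi$ on $\{S_1,\ldots,S_n\}$ by the rule ``when \I is to move, the current state decomposes as a disjoint union of the descendants of the original subgames; pick any component, identify which original $S_i$ it descends from, and play the move prescribed by $\pi_i$ restricted to that component.'' The key step is to argue that any play following $\pi$ projects, in each component, onto a play following $\pi_i$, and that these projections are themselves valid plays of $\GAIMALL$ because moves never cross component boundaries (no rule in Definition~\ref{definition:GAILL} merges two sequents, and \II's additive choices act within a single component). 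Since each $\pi_i$ is winning, every component eventually reaches an initial sequent; as the whole state is won exactly when all components are winning sequents, the combined play reaches a winning state, so $\pi$ is a w.s.

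For the forward direction ($\Rightarrow$), suppose \I has a w.s.\ $\pi$ on $\{S_1,\ldots,S_n\}$, and fix an index $i$; I want a w.s.\ on the singleton $S_i$. Here the subtlety flagged in the text appears: \I's moves under $\pi$ in the component descending from $S_i$ may, in principle, depend on the configuration of the other components. I would handle this by fixing, once and for all, arbitrary \II-responses in all components other than $i$ (\II's only freedom is the additive choice, so any fixing will do), thereby freezing the evolution of those components. Against this fixed background, $\pi$ induces a well-defined strategy $\pi'$ on $S_i$ alone: play whatever $\pi$ would play in the $i$-th component. Because $\pi$ is winning against \emph{all} of \II's choices, in particular against the ones I fixed elsewhere combined with arbitrary choices inside component $i$, every play following $\pi'$ drives the $i$-th component to an initial sequent. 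Hence $\pi'$ is a w.s.\ on $S_i$.

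The main obstacle, and the only place requiring care, is the forward direction's dependency issue: making precise that a strategy on the product state restricts to a legitimate strategy on a single factor. The cleanest way to discharge this is to note that a play on $\{S_1,\ldots,S_n\}$ can be reorganized (since rounds in distinct components commute) so that all rounds touching component $i$ are grouped together, and that this reordering preserves both validity and the winning condition by the independence of moves and winning states. With that reorganization, $\pi'$ is manifestly a strategy depending only on the history within component $i$, and its winning property is inherited directly from $\pi$.
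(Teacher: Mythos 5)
Your proof is correct and follows essentially the same route as the paper, which does not give a formal proof of this lemma but justifies it in the preceding paragraph by exactly the two observations you make: composed strategies give the backward direction, and the independence of valid moves and winning conditions across subgames lets one restrict to such strategies for the forward direction. Your elaboration of the forward direction (fixing \II's choices in the other components and projecting the strategy onto component $i$, using that rounds in distinct components commute) is a faithful and correct formalization of the paper's informal argument.
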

Strategies in a game can be pictured as trees of game states, and therefore strategies share a common form with proofs. In our case, game states are multisets of sequents. However, by virtue of the above lemma, we obtain a notation of winning strategies which uses single sequents as nodes, at least if the initial state of the game is a sequent.
\begin{theorem}[Adequacy for $\GAIMALL$]
\label{theorem:adeq1}
Let $S$ be a sequent. Then 
$\wins{}{\GAIMALL}\{S\}~\iff~\vdash_\aIMALL S$.
\end{theorem}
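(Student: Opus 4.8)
The plan is to prove both implications by induction, exploiting the fact that each round of $\GAIMALL$ is designed to mirror exactly one rule application of $\aIMALL$: the principal formula and rule instance that $\I$ selects at the start of a round is literally an $\aIMALL$ rule whose conclusion is the chosen sequent, and the three clauses of Definition~\ref{definition:GAILL} partition the rules into the unary, binary-multiplicative and binary-additive cases. The Independence Lemma (Lemma~\ref{lemma:independency}) is the bridge that lets me treat each subgame arising from a split as a self-contained single-sequent game, so that the branching structure of a play matches the branching structure of a derivation tree. Throughout I rely on the exact coincidence between the winning states of $\GAIMALL$ and the initial sequents of $\aIMALL$, namely the forms $\Gamma,p\lra p$, $\Gamma,\zero\lra A$ and $\Gamma\lra\one$, so that the base cases of both inductions agree.

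For the direction $\vdash_\aIMALL S \Rightarrow \wins{}{\GAIMALL}\{S\}$, I would induct on the height of a given $\aIMALL$ derivation of $S$ and read off $\I$'s strategy from its last rule. If $S$ is an initial sequent, then $\{S\}$ is already a winning state. If the last rule is unary with premise $S'$, then $\I$ selects the corresponding principal formula and rule, the game moves to $\{S'\}$ without any interaction of $\II$, and I append the winning strategy obtained from the induction hypothesis on the shorter derivation of $S'$. If the last rule is multiplicative with premises $S_1,S_2$, then $\I$ splits $\{S\}$ into $\{S_1\}\cup\{S_2\}$, and the two winning strategies supplied by the induction hypothesis are combined into a winning strategy on the joint state via Lemma~\ref{lemma:independency}. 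If the last rule is additive, then $\I$ plays it and, whatever premise $S_i$ player $\II$ selects, continues with the winning strategy for $S_i$ given by the induction hypothesis. In every case all plays reach a winning state, so the constructed strategy is winning.

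For the converse $\wins{}{\GAIMALL}\{S\} \Rightarrow \vdash_\aIMALL S$, I would induct on the total number of connective occurrences in $S$. This measure is legitimate because every admissible move strictly decomposes the chosen principal formula, and hence strictly decreases the complexity of the sequent it acts on; this simultaneously shows that the game is terminating and that a winning strategy, although presented as a function, unfolds into a finite tree. Given a winning strategy on $\{S\}$: if $\{S\}$ is a winning state then $S$ is an initial sequent, derivable by the matching axiom $I$, $\one_R$ or $\zero_L$. Otherwise the strategy prescribes a principal formula and a rule instance $r$ with conclusion $S$. In the unary and multiplicative cases the successor states are $\{S'\}$ and $\{S_1\}\cup\{S_2\}$; using Lemma~\ref{lemma:independency} I obtain winning strategies on each single-sequent subgame, all of strictly smaller complexity, so the induction hypothesis yields $\aIMALL$-derivations of the premises. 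In the additive case, since $\II$ may select either premise and the strategy must win in both, I again obtain winning strategies, and thence derivations, for $S_1$ and $S_2$. Applying $r$ to these derivations produces a derivation of $S$.

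The only place where genuine care is needed is the extraction in the forward-to-derivation direction of a \emph{finite} proof object from an object that is a priori an arbitrary strategy function, together with the bookkeeping of the shared context $G$ once the game state is a proper multiset rather than a singleton. Both difficulties dissolve here: Lemma~\ref{lemma:independency} reduces any multiset state to its constituent single-sequent games (this is exactly the point where \emph{independence} of the parallel subgames is used, an assumption that will later be broken by a shared budget), and the termination argument above makes the induction on sequent complexity well-founded, guaranteeing that the recursion bottoms out at winning states. I expect the verification that each rule scheme indeed strictly lowers the complexity measure, and that the three clauses of Definition~\ref{definition:GAILL} exhaust the $\aIMALL$ rules, to be routine but essential for both inductions to close.
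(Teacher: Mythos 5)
Your proof is correct and follows essentially the same route as the paper: induction on the derivation for one direction, induction for the converse with Lemma~\ref{lemma:independency} doing the work of decomposing the multiset state arising from a multiplicative split into independent single-sequent subgames. The only (harmless) difference is the induction measure in the game-to-proof direction --- you use the number of connective occurrences in $S$, which indeed strictly decreases under every $\aIMALL$ rule, whereas the paper inducts on the maximal number of moves remaining in the winning strategy.
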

\begin{proof}
($\Leftarrow$) is a straightforward induction on the length of proofs. ($\Rightarrow$) is proved by induction on a w.s. (the maximal number of moves which can occur following it). We only present the case where Lemma \ref{lemma:independency} comes into play.
Assume that the state is $\Delta_1,\Delta_2\lra A\tensor B$  and $\pi$ tells  \I to choose the instance of $\tensor_R$ with premises $\Delta_1\lra A$ and $\Delta_2\lra B$. 
 By {\bf parallelism}, the successor state is $\{(\Delta_1\lra A),(\Delta_2\lra B)\}$. Since $\pi$ is a w.s., it must contain a substrategy $\pi'$ for $\{(\Delta_1\lra A),(\Delta_2\lra B)\}$. By Lemma \ref{lemma:independency}, we may assume that $\pi'$ is of the form: 
``Use $\pi_1$ to play in the subgame $\Delta_1\lra A$ and $\pi_2$ to play in $\Delta_2\lra B$'' for some w.s.'s $\pi_1,\pi_2$ for $\Delta_1\lra A$ and $\Delta_2\lra B$ respectively. By induction, there are $\aIMALL$-proofs 
$\pr_1,\pr_2$  for the sequents $\Delta_1\lra A$ and $\Delta_2\lra B$.  
Applying $\tensor_R$ below 
$\pr_1$ and $\pr_2$, we obtain a $\aIMALL$-proof 
$\pr$ of $\Delta_1,\Delta_2\lra A\tensor B$.
\end{proof} 
\section{Adding costs}\label{sec:costs}
To increase the expressiveness of our framework, we now augment assumptions with costs, where assumptions are formulas occurring {\em negatively} on sequents. Costs will be modeled---for now---by elements of $\real^+$, the set of non-negative real numbers. Formally, we add the unary modal operators $\pnbang{a}$ and $\vnbang{a}$
for each $a\in\real^+$ to our language and call the resulting formulas \emph{extended formulas}. {An extended formula $\vnbang{a}A$ can be considered as a \emph{single use resource with price $a$}: By paying $a$, we can ``unpack''~$\vnbang{a}A$ to $A$ (and~$\vnbang{a}A$ is destroyed in the process). On the other hand,~$\pnbang{a}A$ is a \emph{permanent resource}: We can obtain as many copies of~$A$ from it as we want, each time paying the price~$a$.}
\begin{definition}\label{def:holecont} 
An \emph{extended sequent} is a sequent built from extended formulas in which subformulas $\pnbang{a}A$ and $\vnbang{a}A$ occur only in negative polarity.
\end{definition}
The notion of polarity  is the standard one: A subformula occurrence in the antecedent of a sequent is {\em negative} if it occurs in the scope of an even number (including $0$) of contexts $([\cdot]\limp B)$, and otherwise it is {\em positive}. For occurrences of a subformula in the consequent, one replaces ``even'' by ``odd''. For instance, $\vnbang{a}p \otimes p', (\pnbang{b}q\limp q')\limp q'' \lra \pnbang{c}r\limp r'$ is an  
extended sequent.
We denote by $\pnbang{}\Gamma$ a set of formulas  prefixed with $\pnbang{a}$ for some (not necessarily the same) $a\in\real^+$. We  introduce a game $\GAIMALLR$ 
similarly as we did for  $\GAIMALL$. The rules of $\GAIMALLR$ make reference to the calculus $\aIMALLR$ of Fig. \ref{fig:ll}. It is obtained by interpreting all sequents as extended sequents, replacing the rules $\tensor_R$ and $\limp_L$ as indicated in Fig. \ref{fig:ll} (for internalizing contraction) and adding the \emph{dereliction} rules 
\[
\infer[\pnbang{}_L]{\Gamma,\pnbang{a}A\lra C}{\Gamma,\pnbang{a}A,A\lra C}
\qquad
\infer[\vnbang{}_L]{\Gamma,\vnbang{a}A\lra C}{\Gamma,A\lra C}
\]
Note that there is no right rules for $\pnbang{}$ and $\vnbang{}$ in $\aIMALLR$ since they only appear in negative polarity.
\begin{remark}\label{remark:fragment}
$\aIMALLR$ can be naturally seen as a fragment of subexponential linear logic (\SELL~\cite{danos93kgc}). More specifically, let $\aSELLR$ be a single conclusion calculus for \SELL with weakening, and let $\Sigma=\langle \realbu,\preceq,\mathcal{U}\rangle$ be the subexponential signature  where the set of unbounded subexponentials (that can be weakened and contracted at will) is $\mathcal{U}=\{(a,\mathfrak{u}) \mid a\in \real^+\}$, and
$\preceq$ is any partial order on $\realbu$ in which, as standardly required in \SELL,  no bounded subexponential is above an unbounded one. 
We identify the subexponential $!^{(a,\mathfrak{b})}$ with $\vnbang{a}$ and   $!^{(a,\mathfrak{u})}$ with $\pnbang{a}$. Then $\aIMALLR$ is precisely the subsystem of $\aSELLR$ given by the syntactic restriction that subexponentials occur only in negative polarity. We will exploit this relation between $\aIMALLR$ and $\aSELLR$ later in Section \ref{sec:cut}. For some remarks on the system without the syntactic restriction, see Section~\ref{section:pos}.\end{remark}

Let us return to the game now. The main difference between $\GAIMALL$ and $\GAIMALLR$ is that game states in the latter will involve a {\bf budget} (modeled as a real number) which will decrease whenever rules $\pnbang{}_L$ and $\vnbang{}_L$ are invoked.

\begin{definition}[The game $\GAIMALLR$]\label{definition:GAIMALLR}
$\GAIMALLR$ is a game of two players, \I and \II. Game states are tuples $(H,b)$, where $H$ is a finite multiset of extended sequents and $b\in\real$ is a ``budget''.
\GAIMALL proceeds in rounds, initiated by \I's selection of an extended sequent $S$ from the current game state. The successor state is determined according to rules that fit one of the two following schemes:

\[
\begin{array}{lllll}
{(1)} &(G\cup\{S\},b)&\quad\leadsto\quad&  \quad (G\cup\{S'\},b') & \\
{(2)} &(G\cup\{S\},b)&\quad\leadsto\quad&  \quad (G\cup\{S^1\}\cup\{S^2\},b)
\end{array}
\]

A round proceeds as follows: After \I has chosen an extended sequent $S\in H$ among the current game state, she chooses a rule  instance  $r$ of $\aIMALLR$ such that $S$ is the conclusion of that rule. Depending on  $r$, the round proceeds as follows:
\begin{enumerate}
\item If $r$ is a unary rule different from $\pnbang{}_L,\vnbang{}_L$ with premise $S'$, then the game proceeds in the game state $(G\cup\{S'\},b)$.
\item {\bf Budget decrease:} If $r\in\{\pnbang{}_L,\vnbang{}_L\}$ with premise $S'$ and principal formula $\pnbang{a}A$ or $\vnbang{a}A$, then the game proceeds in the game state $(G\cup\{S'\},b-a)$.
\item {\bf Parallelism:} If $r$ is a binary rule with premises $S_1,S_2$ pertaining to a \emph{multiplicative} connective, then the game proceeds as $(G\cup\{S_1\}\cup\{S_2\},b)$.
\item {\bf \II-choice:} If $r$ is a binary rule with premises $S_1,S_2$ pertaining to an \emph{additive} connective, then \II chooses $S'\in\{S_1,S_2\}$ and the game proceeds in the game state $(G\cup\{S'\},b)$.
\end{enumerate}

\noindent A {\bf winning state} (for \I) is a game state $(H,b)$ such that all $S\in H$ are initial sequents of $\aIMALLR$ and $b\geq 0$.
\end{definition}
Plays and strategies are defined as in $\GAIMALL$. We write $\wins{}{\GAIMALLR}(H,b)$ if \I has a w.s. in the $\GAIMALLR$-game starting on $(H,b)$. The intuitive reading of $\wins{}{\GAIMALLR}(H,b)$ is: \textit{The budget~$b$ suffices to win the game $H$.} From now on, we will just say ``sequent'' and ``formula'' instead of ``extended sequent'' and ``extended formula''.

\begin{example}
Consider the state $
(\{\pnbang{1}p,\vnbang{3}q\lra p\tensor q\},5) 
$. 
In a first move, $\I$ picks $p\tensor q$ and she  finds a partition of the premises not prefixed with~$\pnbang{}$
and decides that  $\vnbang{3}q$ goes to the right premise of $\tensor_R$. So by {\bf parallelism}, the new state is
$
(\{(\pnbang{1}p\lra p),(\pnbang{1}p,\vnbang{3}q\lra q)\},5)
$. 
She now chooses to pick $\pnbang{1}p$ of the first component and, by {\bf budget decrease}, her budget decreases  and the next  state is 
$
(\{(\pnbang{1}p,p\lra p),(\pnbang{1}p,\vnbang{3}q\lra q)\},4)
$. 
Now \I picks $\vnbang{3}q$  leading to
$
(\{(\pnbang{1}p,p\lra p),(\pnbang{1}p,q\lra q)\},1)
$. 
Since both components are initial sequents and  $budget \geq 0$, this is a winning state for $\I$.
\end{example}

Similarly to  $\GAIMALL$, it is not necessary to consider all possible strategies in $\GAIMALLR$: For example, \I never needs to take the budget into account when deciding the next move. (A rule of thumb for \I could be: always play economical, i.e. avoid the rules $\pnbang{}_{L}$ and $\vnbang{}_L$ whenever possible.) It is easy to see that a~$\aIMALLR$-proof $\pr$ of a sequent $S$ translates to a w.s. in $(S,b)$ for some \textit{sufficiently large} budget $b$. Taking these observations together, one can prove the following:

\begin{theorem}[Weak adequacy for $\GAIMALLR$]
\label{theorem:wadeq}
Let $S$ be a sequent. Then

$\exists b\left(\wins{}{\GAIMALLR}(\{S\},b)\right)\quad\iff\quad\vdash_{\aIMALLR} S$
\end{theorem}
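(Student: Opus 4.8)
The plan is to reduce weak adequacy to the already-established adequacy of a \emph{budget-free} version of the game, exploiting the fact that in $\GAIMALLR$ the budget is merely accumulated along a play but is never consulted by \I when she chooses a move. This decoupling of \I's strategic choices from the budget is exactly what makes the existential-budget statement tractable.

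First I would introduce the unbudgeted game $\mathcal{G}^-$ obtained from $\GAIMALLR$ by erasing the budget component: states are finite multisets of sequents, the rules are those of $\GAIMALLR$ with all budget bookkeeping dropped, and a winning state is one consisting solely of initial sequents of $\aIMALLR$. Structurally $\mathcal{G}^-$ is nothing but the game of Definition~\ref{definition:GAILL} played over the calculus $\aIMALLR$ instead of $\aIMALL$: the dereliction rules $\pnbang{}_L$ and $\vnbang{}_L$ act as ordinary unary rules, while the modified $\tensor_R$ and $\limp_L$ remain multiplicative binary rules and hence still trigger \textbf{parallelism}. Therefore the independence property (Lemma~\ref{lemma:independency}) and the adequacy argument (Theorem~\ref{theorem:adeq1}) carry over verbatim, giving
\[
\wins{}{\mathcal{G}^-}\{S\}\quad\iff\quad\vdash_{\aIMALLR}S .
\]

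It then suffices to establish $\exists b\,\bigl(\wins{}{\GAIMALLR}(\{S\},b)\bigr)\iff\wins{}{\mathcal{G}^-}\{S\}$. Left to right is immediate: a budgeted winning strategy, read with budgets forgotten, is a winning strategy in $\mathcal{G}^-$, since dropping the side condition $b\geq 0$ only makes winning easier. For the converse I would take a winning strategy $\pi$ in $\mathcal{G}^-$ on $\{S\}$ and note that, because \I's moves are fixed by $\pi$ and the only branching in the associated strategy tree comes from the binary \textbf{\II-choice} moves, the tree is finitely branching; as $\pi$ is winning, every branch is finite, so by K\"onig's lemma the whole strategy tree is finite. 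Hence there are finitely many maximal plays, each carrying a finite total cost, namely the sum $\sum a$ over the $\pnbang{}_L,\vnbang{}_L$ steps occurring on it. Setting $b$ to the maximum of these finitely many sums, the same $\pi$ read in $\GAIMALLR$ never drives the budget below $0$ and terminates in states with $b\geq 0$, so it is a budgeted winning strategy witnessing $\wins{}{\GAIMALLR}(\{S\},b)$.

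The delicate point—and the reason only \emph{weak} adequacy is claimed—is that in $\GAIMALLR$ the budget is \emph{shared} among the parallel subgames produced by multiplicative splits, so Lemma~\ref{lemma:independency} genuinely fails for a fixed budget. The argument above avoids this by first extracting the branching structure of a winning strategy in the unbudgeted game and only afterwards computing a budget large enough to finance its most expensive play; this is precisely what breaks down once one insists on a specific or minimal budget, which is why determining optimal cost is postponed to Section~\ref{sec:spectrum}. I expect the main obstacle in writing out the details to be the careful verification that the cost accumulated along a play of $\GAIMALLR$ coincides with the sum of the $\pnbang{}_L/\vnbang{}_L$ prices along the corresponding branch of the strategy tree, including the bookkeeping that a multiplicative split passes the shared budget unchanged to both subgames while an additive choice retains only the branch that \II selects.
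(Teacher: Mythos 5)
Your proposal is correct and follows essentially the same route as the paper, which argues that \I's strategic choices never depend on the budget and that any $\aIMALLR$-proof yields a winning strategy for some sufficiently large budget, with the converse obtained by forgetting budgets as in Theorem~\ref{theorem:adeq1}. Your explicit detour through an unbudgeted game and the K\"onig's-lemma finiteness argument merely makes precise the paper's phrase ``for some sufficiently large budget $b$'', so no substantive difference arises.
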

The proof is similar to the one of Theorem \ref{theorem:adeq1}. We call this theorem \emph{weak} adequacy since information about the budget $b$ is lost in the proof theoretic representation. In other words, the game $\GAIMALLR$ is more expressive than the calculus $\aIMALLR$.
To overcome this discrepancy, we now introduce a  labelled extension of $\aIMALLR$ that we call $\laIMALLR$. A $\laIMALLR$-proof is build from labelled sequents $\Gamma\lra_b A$ where $\Gamma\lra A$ is an extended sequent and $b\in\real^+$. The complete system is given in Fig. \ref{fig:lll}. 
Our aim is to prove that
$\wins{}{\GAIMALLR}(\{\Gamma\lra A\},b)\quad\iff\quad\vdash_{\laIMALLR} \Gamma\lra_b A$.

To this end, we need an analogue of Lemma \ref{lemma:independency} (independency of subgames in $\GAIMALL$) for $\GAIMALLR$. Note that crucially, the naive analogue

$\wins{}{\GAIMALLR}(\{S_1,\ldots,S_n\},b)\qquad\iff \qquad\text{for all $i\leq n$, }\wins{}{\GAIMALLR}(\{S_i\},b)$

\noindent does \emph{not} hold: Having a w.s. in $(\{S_1,\ldots,S_n\},b)$ is not the same as having w.s.'s in all $(\{S_i\},b)$'s, since the budget $b$ is shared between the subgames in $\GAIMALLR$. 
However, one can prove that there are strategies in $\GAIMALLR$ which are independent up to a partition of the budget. More precisely, 

\begin{lemma}[Quasi-independency of subgames in $\GAIMALLR$]
\label{lemma:independency2}

\noindent $\wins{}{\GAIMALLR}(\{S_1,\ldots,S_n\},b)$ \iff

\noindent $\exists b_1,\ldots,b_n\geq 0$  s.t. $\sum_{i=1}^n b_i\leq b$  and for all $i\leq n$,  $\wins{}{\GAIMALLR}(\{S_i\},b_i)$.  
\end{lemma}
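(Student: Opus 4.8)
The plan is to organise the whole argument around the \emph{worst-case cost} of a strategy. For a strategy $\sigma$ I write $c(\sigma)$ for the maximal total budget decrease occurring along any play that follows $\sigma$. Since the strategies we shall consider are winning, all their plays are finite and they branch finitely (\II\ chooses between two premises), so by König's lemma their trees are finite and this maximum is attained and is a fixed non-negative real. The elementary fact used everywhere is that a strategy $\sigma$ is winning on $(H,b)$ iff every leaf is a winning state and $c(\sigma)\le b$; in particular a winning strategy on $(\{S_i\},b_i)$ satisfies $c(\sigma)\le b_i$, and conversely any strategy reaching only initial sequents is winning on $(\{S_i\},c(\sigma))$. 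This lets me reduce the stated equivalence to a statement purely about costs.

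For the direction ($\Leftarrow$) I would take winning strategies $\pi_i$ on $(\{S_i\},b_i)$ and form the combined strategy ``play $\pi_i$ inside the descendants of the subgame $S_i$'', exactly as in the proof of Lemma~\ref{lemma:independency}. Because valid moves and winning conditions in distinct subgames are independent, this is a well-defined strategy, and since budget decreases coming from different subgames simply accumulate into the shared running total irrespective of how moves are interleaved, its worst-case cost is $\sum_i c(\pi_i)\le\sum_i b_i\le b$. Hence the combined strategy reaches only initial sequents with final budget at least $b-\sum_i b_i\ge 0$, so it is winning on $(\{S_1,\dots,S_n\},b)$.

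For ($\Rightarrow$) I would prove the sharper claim that from a winning $\sigma$ on $(\{S_1,\dots,S_n\},b)$ one can extract winning strategies $\sigma_i$ on the single subgames $\{S_i\}$ with $\sum_i c(\sigma_i)\le c(\sigma)\le b$; setting $b_i:=c(\sigma_i)$ then yields the lemma. The proof goes by induction on the size of $\sigma$, inspecting \I's first move. The base case (a winning state) and the unary, budget-decrease, and multiplicative cases are routine: one applies the induction hypothesis to the shorter continuation strategy and re-prepends \I's move to the relevant subgame, using in the multiplicative case the parallel combination of direction ($\Leftarrow$), whose cost is additive, and in the dereliction case that the first move contributes exactly its price $a$ to the cost.

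The hard case, and the main obstacle, is the additive one, where \I's first move turns some $S_1$ into a \II-choice between premises $S_1^{1},S_1^{2}$ and $\sigma$ contains winning substrategies $\sigma^{1},\sigma^{2}$ for the two resulting states, the budget being unchanged, so that $c(\sigma)=\max_j c(\sigma^{j})$. The induction hypothesis applied in each branch $j$ gives strategies $\tau^{j}$ on $\{S_1^{j}\}$ and $\rho_i^{j}$ on $\{S_i\}$ ($i\ge 2$) with $c(\tau^{j})+\sum_{i\ge2}c(\rho_i^{j})\le c(\sigma^{j})$. The naive reconciliation, taking the maximum over $j$ in every component, over-counts and may exceed $b$. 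The fix exploits that the spectator subgames $S_2,\dots,S_n$ do not depend on \II's choice in $S_1$: for the active subgame I set $\sigma_1$ to ``make the additive move, then follow $\tau^{j}$ once \II\ has chosen $j$'', so $c(\sigma_1)=\max_j c(\tau^{j})$, while for each spectator $i$ I simply keep the \emph{cheaper} of the two available strategies, giving $c(\sigma_i)=\min_j c(\rho_i^{j})$. Writing $j^\ast$ for the branch attaining $\max_j c(\tau^{j})$, one then gets $c(\sigma_1)+\sum_{i\ge2}c(\sigma_i)=c(\tau^{j^\ast})+\sum_{i\ge2}\min_j c(\rho_i^{j})\le c(\tau^{j^\ast})+\sum_{i\ge2}c(\rho_i^{j^\ast})\le c(\sigma^{j^\ast})\le c(\sigma)$, which is exactly the required bound. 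I expect this asymmetric min/max bookkeeping in the additive case to be the only genuinely delicate point; every other case is straightforward manipulation of finite strategy trees.
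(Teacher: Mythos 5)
Your proof is correct, but it takes a genuinely different route from the paper's. The paper's argument for the left-to-right direction is non-inductive: it first asserts, as a ``wlog'', that the winning strategy $\pi$ on $(\{S_1,\ldots,S_n\},b)$ may be assumed to decompose into substrategies $\pi_1,\ldots,\pi_n$ that are independent both of each other and of the budget (the budget-sensitive analogue of Lemma~\ref{lemma:independency}); it then defines $b_i$ as the worst-case cost of $\pi_i$ --- your $c(\pi_i)$ --- and obtains $\sum_i b_i\leq b$ by an adversary argument: \II\ combines her cost-maximizing strategies $\tau_1,\ldots,\tau_n$ into a single strategy against $\pi$, forcing total cost $\sum_i b_i$ in the parallel game, which must therefore not exceed $b$. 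You instead prove the decomposition and the cost bound simultaneously by induction on the strategy tree, which makes the paper's unproven ``wlog'' step and the adversary-combination step superfluous; the price is that you must handle the interaction between \II-choices and the shared budget explicitly, which you do correctly with the asymmetric bookkeeping in the additive case (the active subgame gets cost $\max_j c(\tau^j)$, each spectator inherits the cheaper of its two extracted strategies, and the chain of inequalities routed through the branch $j^\ast$ attaining $\max_j c(\tau^j)$ gives exactly $c(\sigma_1)+\sum_{i\geq 2}c(\sigma_i)\leq c(\sigma^{j^\ast})\leq c(\sigma)\leq b$). Your version is more self-contained and constructive --- it actually exhibits the $b_i$ and the independent substrategies --- whereas the paper's is shorter but leans on the independence assertion and on reasoning about \II's optimal counterplay.
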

\begin{proof} The direction from right to left is obvious. For the other direction, assume that $\I$ has a w.s. $\pi$ for $(\{S_1,\ldots,S_n\},b)$. We may assume wlog that this strategy is composed of strategies $\pi_1,\ldots,\pi_n$ for the subgames $S_1,\ldots,S_n$ which are both independent from each other and from the budget. In each subgame $S_i$, let $\tau_i$ be a strategy for $\II$ which maximizes the cost $b_i$ (the total decrease of the budget) of playing $\pi_i$ against $\tau_i$. Then $\wins{}{\GAIMALLR}(\{S_i\},b_i)$. Furthermore, from $\tau_1,\ldots,\tau_n$ player \II can compose a strategy $\tau$ such that when played against $\pi$ in the parallel game $\{S_1,\ldots,S_n\}$, the costs for $\I$ sum up to $\sum_{i=1}^n b_i$. Since $\pi$ is a w.s. for $(\{S_1,\ldots,S_n\},b)$, it must be the case that $\sum_{i=1}^n b_i\leq b$.
\end{proof}

We emphasize that the game rules of $\GAIMALLR$ do \emph{not} force \I to know a partition of the budget in order to play parallel subgames. Nevertheless, Lemma~\ref{lemma:independency2} tells us that finding such a partition is always possible {\it in principle} (for an omnipotent player \I).
Now we can  prove the desired correspondence.

\begin{figure}[t]
{
\[
\begin{array}{c}
\hline\mbox{labelled sequent system for }\laIMALLR\\\hline\\
 \infer[\tensor_L]{\Gamma, A \tensor B \lra_b C}
{\Gamma, A, B \lra_b C} 
\quad 
\infer[\tensor_R]{\pnbang{}\Gamma,\Delta_1, \Delta_2 \lra_{a+b} A\tensor B}
{\pnbang{}\Gamma,\Delta_1 \lra_a A & \pnbang{}\Gamma,\Delta_2 \lra_b B}
\\\\
\infer[\lolli_L]{\pnbang{}\Gamma,\Delta_1, \Delta_2, A \lolli B \lra_{a+b} C}
{\pnbang{}\Gamma,\Delta_1 \lra_a A & \pnbang{}\Gamma,\Delta_2, B \lra_b C}
\quad 
\infer[\lolli_R]{\Gamma \lra_b A \lolli B}{\Gamma, A \lra_b B}
\\\\
 \infer[\with_{L_i}]{\Gamma, A_1 \with A_2 \lra_b B}
{\Gamma, A_i\lra_b B} 
\quad 
\infer[\with_R]{\Gamma \lra_{\max\{a,b\}} A \with B}
{\Gamma \lra_a A & \Gamma \lra_b B}
\\\\
\infer[\oplus_L]{\Gamma, A \oplus B \lra_{\max\{a,b\}} C}
{\Gamma, A \lra_a C & \Gamma, B \lra_b C}
\quad 
\infer[\oplus_{R_i}]{\Gamma \lra_b A_1 \oplus A_2}{\Gamma \lra_b A_i}
\\\\
\infer[\pnbang{}_L]{\Gamma,\pnbang{a}A\lra_{c+a} C}{\Gamma,\pnbang{a}A,A\lra_{c} C}
\qquad
\infer[\vnbang{}_L]{\Gamma,\vnbang{a}A\lra_{c+a} C}{\Gamma,A\lra_c C}
\\\\
  \infer[I]{\Gamma,p \lra_0 p}{} 
 \qquad
\infer[\one_R]{\Gamma \lra_0 \one}{}
\qquad 
\infer[\zero_L]{ \Gamma,\zero\lra_0 A}{}
\qquad 
\infer[w_{\ell} (b\geq a)]{ \Gamma\lra_b A}{\Gamma\lra_a A}
\end{array}
\]}\caption{The labelled sequent system $\laIMALLR$}
\label{fig:lll}
\end{figure}

\begin{theorem}[strong adequacy for $\GAIMALLR$]
\label{theorem:adeq2}

$\wins{}{\GAIMALLR}(\{\Gamma\lra A\},b)\quad\iff\quad\vdash_{\laIMALLR} \Gamma\lra_b A$. 
\end{theorem}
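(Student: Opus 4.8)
The plan is to prove both implications by induction, letting the label arithmetic of $\laIMALLR$ act as a faithful record of the budget dynamics of $\GAIMALLR$. A preliminary observation I will use throughout is \emph{budget monotonicity}: if $\wins{}{\GAIMALLR}(\{S\},b)$ and $b'\geq b$, then $\wins{}{\GAIMALLR}(\{S\},b')$. This is immediate, since replaying the same strategy with the larger budget only shifts every intermediate budget upward by $b'-b\geq 0$, so no play ever runs into debt. This monotonicity is precisely the game-theoretic content of the weakening rule $w_\ell$, and it is what lets labels be bounds rather than exact costs.

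For the direction ($\Leftarrow$) I would induct on the height of a $\laIMALLR$-derivation of $\Gamma\lra_b A$, matching each rule with the corresponding game move and checking that the conclusion label bounds the budget that is needed. The initial sequents carry label $0$ and are winning states for any $b\geq 0$. The unary non-modal rules $\tensor_L,\lolli_R,\with_{L_i},\oplus_{R_i}$ leave the label untouched and correspond to game moves of type (1), so the inductive budget is preserved. The dereliction rules $\pnbang{}_L,\vnbang{}_L$ raise the label from $c$ to $c+a$, mirroring the \textbf{budget decrease} by $a$: a w.s.\ on the premise with budget $c$ gives a w.s.\ on the conclusion with budget $c+a$. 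For the multiplicative rules $\tensor_R,\lolli_L$, whose labels add, I combine w.s.'s for the two premises with budgets $a$ and $b$ into a w.s.\ for the split state with budget $a+b$ via the easy right-to-left direction of Lemma~\ref{lemma:independency2}. For the additive rules $\with_R,\oplus_L$, whose labels take the maximum, I use that \II's choice lands in exactly one premise, so a budget of $\max\{a,b\}$ suffices for either branch by budget monotonicity. The rule $w_\ell$ itself is discharged by budget monotonicity.

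For the direction ($\Rightarrow$) I would induct on the maximal number of moves of a w.s.\ $\pi$ on $(\{\Gamma\lra A\},b)$, exactly as in Theorem~\ref{theorem:adeq1}, and case-split on \I's first move. If the state is already winning, then $\Gamma\lra A$ is an initial sequent, derivable with label $0$, and $w_\ell$ lifts this to $b$ (legal since $b\geq 0$). Unary and dereliction first moves are inverted directly by the homonymous $\laIMALLR$-rules; the dereliction case uses $b-a\geq 0$, which holds because the budget only decreases along a winning play and must remain nonnegative at the reached winning state, so the induction hypothesis applies at budget $b-a$ and the labelled rule restores the label $(b-a)+a=b$. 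The additive case is equally direct: since \II may force either premise, the restriction of $\pi$ wins both $(\{S_1\},b)$ and $(\{S_2\},b)$, the induction hypothesis derives both premises with label $b$, and the labelled rule reconstructs the conclusion with label $\max\{b,b\}=b$.

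The crux, and the step I expect to be the main obstacle, is the multiplicative case of ($\Rightarrow$), where \I's first move splits $\Gamma\lra A\tensor B$ (or applies $\lolli_L$) into a parallel state $\{S_1,S_2\}$ sharing the single budget $b$. The naive ``split the strategy'' argument fails because the budget is global, and this is exactly where Lemma~\ref{lemma:independency2} (quasi-independency) does the work: it supplies a partition $b_1+b_2\leq b$ with $\wins{}{\GAIMALLR}(\{S_i\},b_i)$ for each $i$. The induction hypothesis then yields $\laIMALLR$-derivations of the premises with labels $b_1$ and $b_2$; the labelled multiplicative rule gives the conclusion with label $b_1+b_2$, and a final $w_\ell$ raises this to the required $b$ since $b_1+b_2\leq b$. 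The one point to verify carefully is that the sub-strategies extracted through Lemma~\ref{lemma:independency2} are strictly shorter than $\pi$, so that the induction measure decreases; this holds because each omits \I's initial splitting move together with all moves played in the sibling subgame.
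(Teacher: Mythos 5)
Your proposal is correct and follows essentially the same route as the paper: induction on the derivation for ($\Leftarrow$) and on the strategy for ($\Rightarrow$), with budget monotonicity (the game counterpart of $w_\ell$) handling the additive/weakening cases and Lemma~\ref{lemma:independency2} supplying the budget partition $b_1+b_2\leq b$ in the multiplicative case of ($\Rightarrow$), followed by $\tensor_R$ and $w_\ell$. Your explicit checks (nonnegativity of $b-a$ in the dereliction case, decrease of the induction measure when extracting substrategies) are details the paper leaves implicit but do not change the argument.
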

\begin{proof}
($\Leftarrow$) By induction on the length of a proof 
$\pr$ of $\Gamma\lra_{b} A$. We highlight two cases.
Consider the  following two possible 
 ends for $\pr$: 
\[
\deduce[(1)\ \ ]{}{} \infer[\with_R]{\Gamma\lra_{\max\{c,d\}} C\with D}{\Gamma\lra_c C & \Gamma\lra_d D}
\qquad
\deduce[(2)\ \ ]{}{}
\infer[\tensor_R]{\Delta_1,\Delta_2\lra_{c+d} C\otimes D}{\Delta_1\lra_c C & \Delta_2\lra_d D}
\]
In both cases, by induction, there are w.s.'s $\pi_1$ and $\pi_2$ for: (1) the game states $(\{\Gamma\lra C\},c)$ and $(\{\Gamma\lra D\},d)$;  
and (2) the game states 
$(\{\Delta_1\lra C\},c)$ and $(\{\Delta_2\lra D\},d)$ respectively.
The needed w.s.'s $\pi_{\with}$ for the game state $(\{\Gamma\lra C\with D\},\max\{c,d\})$ and $\pi_{\otimes}$ for the game state $(\{\Delta_1,\Delta_2\lra C\otimes D\},c+d)$ are: 

\begin{quote} (1)$\pi_{\with}$: Choose the instance of $\with_R$ as above. By 
\II-choice, the successor game state is either $(\{\Gamma\lra C\},\max\{c,d\})$ or $(\{\Gamma\lra D\},\max\{c,d\})$. In any case, the budget in the successor state is greater or equal than both $c$ and $d$, so \I can continue playing according to $\pi_1$ resp.  $\pi_2$.\end{quote}
\begin{quote} (2)$\pi_{\otimes}$: Choose the instance of $\tensor_R$ as above. By {\bf parallelism}, the successor  state is $(\{\Delta_1\lra C,\Delta_2\lra D\},c+d)$. Use $\pi_1$ to play the subgame $\Delta_1\lra C$ and $\pi_2$ to play in $\Delta_2\lra D$. By assumption on $\pi_1$ and $\pi_2$, the total costs when playing both strategies in parallel cannot exceed $c+d$.\end{quote}
($\Rightarrow)$ By induction on the length of a strategy $\pi$
. We present only the case where Lemma \ref{lemma:independency2} is used. 
Assume that the state is $(\{\Delta_1,\Delta_2\lra C\tensor D\},b)$  and $\pi$ tells  \I to choose the instance of $\tensor_R$ with premises $\Delta_1\lra C$ and $\Delta_2\lra D$. 
 By {\bf parallelism}, the successor state is $(\{\Delta_1\lra C,\Delta_2\lra D\},b)$. Since $\pi$ is a w.s., it must contain a substrategy $\pi'$ for this state. By Lemma \ref{lemma:independency2}, we may assume that $\pi'$ is composed of substrategies $\pi_1,\pi_2$ for the game states $(\{\Delta_1\lra C\},c)$ and $(\{\Delta_2\lra D\},d)$ where $c+d\leq b$. By induction, there are $\aIMALL$-proofs 
$\pr_1,\pr_2$  for the sequents $\Delta_1\lra_c C$ and $\Delta_2\lra_d D$.  
Applying $\tensor_R$ and $w_\ell$ below $\pr_1$ and $\pr_2$, we obtain a $\aIMALL$-proof $\pr$ of $\Delta_1,\Delta_2\lra_b C\tensor D$.
\end{proof}

Let $S_b$ denote the labelled sequent corresponding to the sequent $S$ with label $b$. Given
$\Pi$ a $\laIMALLR$-proof of $S_b$, we define the many-to-one onto {\em skeleton} function $\mathcal{SK}(\Pi)$ as the $\aIMALLR$-proof $\pr$ of $S$ obtained by removing all labels and applications of $w_\ell$ from $\Pi$.  Conversely, 
we define the one-to-one {\em decoration} function $\mathcal{D}(\pr)$ as the $\laIMALLR$-proof $\Pi^\ell$ of $S_a$, obtained by assigning the label $0$ to all initial sequents 
of $\pr$ and propagating the labels downwards according to the rules of $\laIMALLR$. 
We define $\costs{\pr}:=a$. Let $\Lambda\in \mathcal{SK}^{-1}(\pr)$ be a proof of $S_c$. It is easy to see that $a\leq c$, that is, $\costs{\pr}$ is the minimal label which can be attached to $S$ w.r.t. $\pr$.
In game theoretic terms, this means the following.
\begin{theorem}
Given a $\aIMALLR$-proof $\pr$ of a sequent $S$, $\costs{\pr}$ is the smallest budget which suffices to win the game $\GAIMALLR$ on $S$ when following the strategy corresponding to $\pr$.
\end{theorem}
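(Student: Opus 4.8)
The plan is to establish both directions of the claim by relating the decoration function $\mathcal{D}$ to the way budgets are consumed along a play. The key conceptual point is that the strategy ``corresponding to $\pr$'' is obtained by reading $\pr$ bottom-up as a sequence of \I-moves, and that the labelled proof $\mathcal{D}(\pr) = \Pi^\ell$ records, at each labelled sequent, exactly the worst-case budget still needed to finish the subgame rooted there. I would first make precise what the strategy corresponding to $\pr$ is: at each game state, \I chooses the principal formula and rule instance dictated by the matching inference in $\pr$, using the splitting of the context prescribed by the $\tensor_R$ and $\limp_L$ steps. This is well defined because every $\aIMALLR$-rule matches exactly one of the game schemes of Definition~\ref{definition:GAIMALLR}.

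Next I would argue, by induction on the structure of $\pr$, that when \I follows this strategy from the state $(\{S\}, b)$, the minimal budget $b$ for which \emph{all} plays reach a winning state is precisely the label $a = \costs{\pr}$ assigned to the root of $\mathcal{D}(\pr)$. The labelling rules of $\laIMALLR$ (Figure~\ref{fig:lll}) are designed to propagate exactly this worst-case cost: the additive rules $\with_R$ and $\oplus_L$ take $\max\{a,b\}$ of the two premise labels, matching the fact that \II chooses which subgame to enter and will pick the more expensive one; the multiplicative rules $\tensor_R$ and $\limp_L$ take $a+b$, matching the fact that the shared budget must cover both parallel subgames simultaneously (this is exactly the content of the quasi-independency Lemma~\ref{lemma:independency2}); and the dereliction rules add the price $a$ of the unpacked modality, matching the \textbf{budget decrease} scheme. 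The initial-sequent rules assign label $0$, matching that a winning state requires only $b \geq 0$.

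For the lower bound I would use the soundness direction ($\Leftarrow$) of Theorem~\ref{theorem:adeq2}: since $\mathcal{D}(\pr)$ is a $\laIMALLR$-proof of $S_a$, that theorem already yields $\wins{}{\GAIMALLR}(\{S\}, a)$, and one checks that the witnessing w.s.\ is precisely the strategy corresponding to $\pr$, so budget $a$ suffices. For the matching lower bound---that no $b < a$ suffices when following this strategy---I would exhibit a \II-strategy that forces the worst case: at every $\with_R$ or $\oplus_L$ node \II steers toward the premise carrying the larger label, and at multiplicative splits both subgames are played in parallel so their costs are summed. Tracking the accumulated budget decrease along the resulting play and invoking the already-observed inequality $\costs{\pr} \leq c$ for every $\Lambda \in \mathcal{SK}^{-1}(\pr)$ gives that the total cost of this worst-case play equals $a$, so any $b < a$ leads to a state with negative budget, i.e.\ a non-winning state.

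The main obstacle I expect is making the phrase ``the strategy corresponding to $\pr$'' fully rigorous in the presence of the additive choices: the proof $\pr$ fixes \I's moves, but \II's choices at additive branchings generate a whole family of plays, and I must verify that the $\max$ in the labelling rules correctly captures the supremum of budget consumption over \emph{all} these plays rather than over a single one. Once the invariant ``the label at each node equals the worst-case residual cost of the corresponding subgame under \I's fixed strategy'' is stated and proved by induction, both the sufficiency of budget $a$ and its minimality follow; the real work is in formulating that invariant precisely and checking the $\tensor_R$/$\limp_L$ cases against Lemma~\ref{lemma:independency2}, where the additivity of cost across independent subgames is what forces the $a+b$ combination.
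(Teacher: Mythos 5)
Your proposal is correct, and it is worth noting that the paper itself gives no explicit proof of this theorem: it is presented as an immediate game-theoretic restatement of the preceding observation that $\costs{\pr}$ is the minimal label attachable to $S$ among all $\Lambda\in\mathcal{SK}^{-1}(\pr)$, with Theorem~\ref{theorem:adeq2} serving as the implicit bridge between labelled proofs with skeleton $\pr$ and budgets sufficient for the strategy read off from $\pr$. Your elaboration fills in exactly what that bridge requires. The upper bound via the $\Leftarrow$ direction of Theorem~\ref{theorem:adeq2} (observing that the witnessing w.s.\ is the one corresponding to $\pr$) is precisely how the paper's construction works, and your explicit adversarial \II-strategy for the lower bound --- steer into the premise with the larger label at each $\with_R$/$\oplus_L$ node, and observe that costs across parallel subgames add --- is the right way to show that the $\max$ and $+$ in the decoration compute the worst-case total budget decrease, so no $b<\costs{\pr}$ suffices. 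One small remark: your appeal to the inequality $\costs{\pr}\leq c$ for $\Lambda\in\mathcal{SK}^{-1}(\pr)$ in the lower-bound step is not really what carries the argument (that inequality concerns labels of decorated proofs, not budgets of plays); the direct induction establishing the invariant ``the label at each node of $\mathcal{D}(\pr)$ equals the worst-case residual cost of the corresponding subgame under \I's fixed strategy'' is self-contained and is what you should rely on. With that invariant stated, both directions follow, and your treatment of the parallel case via Lemma~\ref{lemma:independency2} is consistent with how the paper handles the analogous step in Theorem~\ref{theorem:adeq2}.
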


\begin{example}\label{ex:riddle}
Consider the following well-known riddle:
\begin{quote}
You have white and black socks in a drawer in a completely dark room. How many socks do you have to take out blindly to be sure of having a matching pair? 
\end{quote}
We can model the matching pair by the disjunction $(w\tensor w)\oplus(b\tensor b)$, and the act of drawing a random sock by the labelled formula $\pnbang{1}(w\oplus b)$. The above question then becomes:
\begin{quote}
For which $n$ is the sequent $\pnbang{1}(w\oplus b)\lra_n (w\tensor w)\oplus(b\tensor b)$   provable?
\end{quote}
The following proof shows that $n=3$ suffices: \\

\resizebox{\textwidth}{!}{
$
\infer=[3\times\pnbang{}_L]{G\lra_3 F}{
 \infer[\oplus_L]{G, w\oplus b, w\oplus b, w\oplus b\lra_0 F}{
  \infer[\oplus_L]{G, w, w\oplus b, w\oplus b\lra_0 F}{
   \infer[\oplus_R]{G, w, w, w\oplus b\lra_0 F}{
   \infer=[\otimes_R, I]{G, w, w, w\oplus b\lra_0 w\tensor w}{}
   }
   &
   \infer[\oplus_L]{G, w, b, w\oplus b\lra_0 F}{
    \infer[\oplus_R]{G, w, b, w\lra_0 F}{\infer=[\otimes_R,I]{G, w, b, w\lra_0 (w\tensor w)}{}}
    &
    \infer[\oplus_R]{G, w, b, b\lra_0 F}{\infer=[\otimes_R,I]{G, w, b, b\lra_0 b\tensor b}{}}
   }
  }
  &
  \deduce{\pr}{}
 }
}
$
}\\

where derivation $\pr$ is symmetric, $F=(w\tensor w)\oplus(b\tensor b)$ and $G=\pnbang{1}(w\oplus b)$. 
\end{example}

\subsection{The spectrum of a provable sequent}
\label{sec:spectrum}

Due to weakening on labels, many proofs in $\laIMALLR$ of labelled sequents of the form $S_b$ correspond to one skeleton proof in $\aIMALLR$ of the sequent $S$. On the other hand, $S$ may have, itself, many proofs in $\aIMALLR$, each of them having a cost, uniquely determined by the decoration $\mathcal{D}$. In this section we will consider the {\em spectrum} of such costs and prove the existence of a minimal one.
\begin{definition}
$\spec{S}:=\{\costs{\pr}\mid\text{$\pr$ is a $\aIMALLR$-proof of $S$}\}$.
\end{definition}
\noindent For example, $\spec{\pnbang{1}p,\vnbang{0.8}p,\vnbang{0.8}p\lra p\otimes p}$ consists of the numbers $\{1.6,1.8,2.6\}$ and all combinations $n+k\cdot 0.8$ where $n,k$ are natural numbers and $n\geq 2,k\leq 2$.

A subset $X\subseteq\real$ is called 
\emph{discrete} if, for every $x\in X$, there is an open interval $I\subseteq\real$ such that $I\cap X=\{x\}$.
We can prove:
\begin{theorem}
\label{theorem:propSpec}
For any sequent $S$, $\spec{S}\subseteq\real^+$ is discrete and closed.
\end{theorem}
\begin{proof}
Let $a_1,\ldots,a_n$ denote all real numbers appearing as $\pnbang{a}$ or $\vnbang{a}$ in $S$, and let us denote by $\Omega(a_1,\ldots,a_n)$ the set of all linear combinations of $a_1,\ldots,a_n$ over $\mathbb{N}$, i.e., 
$
\Omega(a_1,\ldots,a_n):=\{k_1\cdot a_1+\ldots+k_n\cdot a_n\mid k_1,\ldots,k_n\in\mathbb{N}\}
$. 
By inspecting the rules of $\laIMALLR$ and since $w_\ell$ is not applied in $\mathcal{D}(\pr)$, it is easy to see that~$\costs{\pr}\in\Omega(a_1,\ldots,a_n)$,
and hence
$\spec{S}\subseteq\Omega(a_1,\ldots,a_n)$. 
It suffices to show that each bounded monotone sequence in $\Omega(a_1,\ldots,a_n)$ is eventually constant. We may assume wlog that all the $a_i$'s are nonzero. Now consider a sequence
 $ 
(k^i_1\cdot a_1+\ldots+k^i_n\cdot a_n)_{i\geq 1}
$ 
in $\Omega(a_1,\ldots,a_n)$, and assume that $B$ is an upper bound for it (a trivial lower bound is always $0$). Pick a number $K$ such that $K\cdot\min\{a_1,\ldots,a_n\}>B$. It follows that for all $i,j$ we have $k^i_j< K$. In particular, there are only finitely many different terms in the sequence, from which our claim follows.
\end{proof}
Since any bounded below, closed set in $\real$ has an minimum, we obtain:
\begin{corollary}\label{cor:spectrum}
If $\vdash_{\aIMALLR}\Gamma\lra A$, then $\spec{\Gamma\lra A}$ has a least element. In other words, there is a smallest $b$ such that $\vdash_{\laIMALLR}\Gamma\lra_b A$.
\end{corollary}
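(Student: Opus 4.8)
The plan is to read the corollary off Theorem~\ref{theorem:propSpec} together with the observation that the provability hypothesis makes the spectrum nonempty. First I would note that $\vdash_{\aIMALLR}\Gamma\lra A$ supplies at least one $\aIMALLR$-proof $\pr$ of $\Gamma\lra A$, so $\costs{\pr}\in\spec{\Gamma\lra A}$ and in particular $\spec{\Gamma\lra A}\neq\emptyset$. Theorem~\ref{theorem:propSpec} then tells us that $\spec{\Gamma\lra A}$ is a closed subset of $\real^+$, and being contained in $\real^+$ it is bounded below by $0$.

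The analytic fact I would invoke is entirely standard: every nonempty subset of $\real$ that is bounded below has an infimum, and if the set is moreover closed then this infimum is attained, i.e.\ it is a minimum. Applying this to $\spec{\Gamma\lra A}$ yields a least element $b_0:=\min\spec{\Gamma\lra A}$. I would remark in passing that discreteness is not needed for this step; closedness together with boundedness below already suffices, so the minimum is obtained from Theorem~\ref{theorem:propSpec} alone.

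It then remains to identify $b_0$ with the smallest label $b$ for which $\vdash_{\laIMALLR}\Gamma\lra_b A$, and for this I would use the skeleton/decoration correspondence together with the weakening rule $w_\ell$. On the one hand, if $\Pi$ proves $\Gamma\lra_b A$, then its skeleton $\mathcal{SK}(\Pi)$ is an $\aIMALLR$-proof of $\Gamma\lra A$, and since $\costs{\mathcal{SK}(\Pi)}$ is by definition the minimal label attachable to that skeleton, we get $b_0\leq\costs{\mathcal{SK}(\Pi)}\leq b$. On the other hand, taking $\pr$ with $\costs{\pr}=b_0$, its decoration $\mathcal{D}(\pr)$ proves $\Gamma\lra_{b_0}A$, and one application of $w_\ell$ yields $\Gamma\lra_b A$ for every $b\geq b_0$. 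Hence the set of provable labels is exactly $[b_0,\infty)$, whose least element is $b_0$.

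I do not anticipate a genuine obstacle, since the mathematical weight is carried entirely by Theorem~\ref{theorem:propSpec} and the corollary is essentially a packaging step. The only point demanding a little care is the bookkeeping of the last paragraph: making sure that $w_\ell$ moves $b_0$ up to every larger label while no labelled proof can undercut $b_0$, so that the minimum of the cost spectrum coincides with the provability threshold for labelled sequents.
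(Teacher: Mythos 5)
Your proposal is correct and follows essentially the same route as the paper, which likewise obtains the minimum directly from Theorem~\ref{theorem:propSpec} via the fact that a nonempty, closed, bounded-below subset of $\real$ attains its infimum. The extra bookkeeping you supply---using the skeleton/decoration correspondence and $w_\ell$ to identify $\min\spec{\Gamma\lra A}$ with the smallest provable label---is left implicit in the paper but is exactly the intended justification of the ``in other words'' clause.
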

Corollary \ref{cor:spectrum} tells us that cost-optimal strategies for all provable sequents exist, but note that the proof is not constructive.
Nevertheless, we may now define:

$
\costs{S}:=\begin{cases}\min(\spec{S})&\mbox{if }\vdash_{\aIMALLR} S\\ \infty&\mbox{otherwise}\end{cases}
$ 
\subsection{Cut admissibility}
\label{sec:cut}

So far, the  results about our game semantics $\GAIMALLR$ did not depend essentially on the chosen calculus $\aIMALLR$. We now want to relate 
proof-theoretic properties of $\aIMALLR$ and $\laIMALLR$ to the game semantics. Recall that~$\aIMALLR$ can be seen as a fragment of~$\aSELLR$, arising from the syntactic restriction that the modal operators~$\vnbang{a},\pnbang{a}$ occur only negatively in sequents (Remark \ref{remark:fragment}); consequently, there is no corresponding right rule (\emph{promotion}) in~$\aIMALLR$. This has the effect that---even though (implicit) contraction on formulas $\pnbang{a}A$ is present in~$\aIMALLR$---the proof theory of~$\aIMALLR$ is closer to $\mathbf{aIMALL}$ than to~$\aSELLR$.

$\aIMALLR$ inherits the admissibility of the following cut rule from~$\aSELLR$
\[
\infer[cut]{\pnbang{}\Gamma,\Delta_1,\Delta_2\lra C}
	{ \pnbang{}\Gamma,\Delta_1\lra A &
	\pnbang{}\Gamma,\Delta_2,A\lra C
	}
\] 
Note that, appearing both in a positive and a negative context, the cut formula~$A$ cannot contain any modal operator. 

Now, let us extend cut admissibility to the labelled system $\laIMALLR$.
Assume that both
$\pnbang{}\Gamma,\Delta_1\lra_a A$ and $\pnbang{}\Gamma,\Delta_2,A\lra_b C$ are provable in $\laIMALLR$. Forgetting  labels $a$ and $b$, we can conclude, from cut-admissibility in $\aIMALLR$, that $\vdash_{\aIMALLR}\pnbang{}\Gamma,\Delta_1,\Delta_2\lra C$. But then, $\pnbang{}\Gamma,\Delta_1,\Delta_2\lra_c C$ is also provable in $\laIMALLR$ with, \eg, 
$c=\costs{\pnbang{}\Gamma,\Delta_1,\Delta_2\lra C}$ (see Cor.~\ref{cor:spectrum}).
Hence, stating cut-admissibility in $\laIMALLR$ strongly depends on the possibility of defining a  computable function $f$ relating $c$ with the labels of the premises of the cut rule. 
We show that $f(a,b)=a+b$ is the {\em minimal} such function.
\begin{theorem}\label{thm:cutAdm}
For $f(a,b)=a+b$, the following cut rule is admissible in $\laIMALLR$:
\[
\infer[cut_\ell]{\pnbang{}\Gamma,\Delta_1,\Delta_2\lra_{f(a,b)} C}
	{ \pnbang{}\Gamma,\Delta_1\lra_a A &
	\pnbang{}\Gamma,\Delta_2,A\lra_b C
	}
\]
Moreover, whenever $cut_\ell$ is admissible w.r.t. a given $f'$, then $a+b\leq f'(a,b)$.
\end{theorem}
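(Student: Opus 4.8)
The plan is to prove the two halves separately: that $f(a,b)=a+b$ is an admissible combining function (the upper bound), and that no smaller $f'$ can work (minimality). For the first half I would not re-prove cut-elimination from scratch, but rather rely on the admissibility that $\aIMALLR$ inherits from $\aSELLR$ (Remark~\ref{remark:fragment}) and simply track how labels propagate through the reduction. The point is that, because the cut formula $A$ carries no modal operator, the reductions are exactly the additive/multiplicative reductions of $\mathbf{aIMALL}$, with the shared context $\pnbang{}\Gamma$ merely carried along; no exponential/contraction reduction on the cut formula ever occurs.

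Concretely, I would prove by induction on the cut-elimination procedure the invariant: if $\vdash_{\laIMALLR}\pnbang{}\Gamma,\Delta_1\lra_a A$ and $\vdash_{\laIMALLR}\pnbang{}\Gamma,\Delta_2,A\lra_b C$, then $\pnbang{}\Gamma,\Delta_1,\Delta_2\lra_{c}C$ is derivable for some $c\le a+b$; one application of $w_\ell$ then yields the label $a+b$ exactly. The representative cases are: (i) multiplicative principal cut ($A=A_1\tensor A_2$), where $\Pi_1$ ends in $\tensor_R$ (so $a=a_1+a_2$) and $\Pi_2$ in $\tensor_L$ (label unchanged), which reduces to two successive smaller cuts on $A_1$ and $A_2$ whose label bounds combine additively — because $\tensor_R$ and the cut itself add labels — to $a_1+a_2+b=a+b$; (ii) additive principal cut ($A=A_1\with A_2$), where $\with_R$ sets $a=\max\{a_1,a_2\}$ but only the branch $a_i$ selected by the matching $\with_{L_i}$ is cut, giving $a_i+b\le a+b$; and (iii) the commutative cases where the last rule of a premise acts on a formula other than $A$ — in particular the dereliction rules $\pnbang{}_L,\vnbang{}_L$, where the cost $a'$ subtracted when the rule is permuted above the cut is restored exactly when the rule is reapplied below it. The main bookkeeping obstacle is checking that the additive $\max$ and the additive/multiplicative context splits never push the running label above $a+b$, but this is routine since labels are combined only by $+$ and $\max$ and $+$ dominates.

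For minimality I would exhibit, for every $a,b\in\real^+$, a single instance of $cut_\ell$ whose conclusion genuinely costs $a+b$. Take $\pnbang{}\Gamma$ empty, cut formula $A=r$, and the two premises $\vnbang{a}r\lra_a r$ and $\vnbang{b}s,r\lra_b r\tensor s$; both labels are optimal, i.e.\ equal to the respective $\costs{\cdot}$, since the only proofs must derelict $\vnbang{a}r$ (cost $a$) and $\vnbang{b}s$ (cost $b$). The cut conclusion is $\vnbang{a}r,\vnbang{b}s\lra r\tensor s$, and I claim $\costs{\vnbang{a}r,\vnbang{b}s\lra r\tensor s}=a+b$: any proof must end in $\tensor_R$ (as $r\tensor s$ is the only formula that can be principal on the right), forcing $\vnbang{a}r$ into the branch proving $r$ and $\vnbang{b}s$ into the branch proving $s$, so both units must be unpacked and their costs land in the two \emph{summed} premises of $\tensor_R$.

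Finally, I would invoke the characterization (immediate from Corollary~\ref{cor:spectrum} together with the decoration map $\mathcal{D}$ and weakening $w_\ell$) that $\vdash_{\laIMALLR}S_c$ holds iff $c\ge\costs{S}$. If $cut_\ell$ were admissible with respect to some $f'$, then applying it to the two premises above would force $\vdash_{\laIMALLR}\vnbang{a}r,\vnbang{b}s\lra_{f'(a,b)}r\tensor s$, hence $f'(a,b)\ge\costs{\vnbang{a}r,\vnbang{b}s\lra r\tensor s}=a+b$. Since $a,b$ were arbitrary, $a+b\le f'(a,b)$ for all $a,b$, which is exactly the claimed minimality of $f(a,b)=a+b$.
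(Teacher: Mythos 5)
Your proposal is correct and follows essentially the same route as the paper: the upper bound is obtained by running the standard $\mathbf{aIMALL}$ cut-reduction and checking that the labelling (sums for multiplicative steps, $\max$ for additive ones, restoration of dereliction costs in commutative cases) never exceeds $a+b$, and minimality is shown by a two-premise witness whose cut conclusion provably costs exactly $a+b$ (the paper uses $\pnbang{a}p\lra_a p$ and $p,\pnbang{b}q\lra_b p\tensor q$ rather than your $\vnbang{}$-variant, an immaterial difference). The only small slip is your claim that any proof of $\vnbang{a}r,\vnbang{b}s\lra r\tensor s$ ``must end in $\tensor_R$'' --- it may instead end in a dereliction $\vnbang{}_L$ --- but the lower bound $a+b$ on its cost still holds since both formulas must be derelicted somewhere and their costs accumulate additively in every case.
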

\begin{proof}
For cut admissibility, one can follow the standard cut reduction strategy of $\mathbf{aIMALL}$ and observe that it is compatible with the proposed labelling of the cut rule. Consider for instance  the following reduction (note that $\max\{a+c,a+d\}=a+\max\{c,d\}$):\\

\resizebox{.7\textwidth}{!}{$
\infer[cut_\ell]{\pnbang{}\Gamma,\Delta_1,\Delta_2\lra_{a+\max\{c,d\}} C\with D}
	{ \pnbang{}\Gamma,\Delta_1\lra_a A &
	\infer[\with_R]{\pnbang{}\Gamma,\Delta_2,A\lra_{\max\{c,d\}} C\with D}
		{ \pnbang{}\Gamma,\Delta_2,A\lra_c C &
		\pnbang{}\Gamma,\Delta_2,A\lra_d D
		}
	}
$}
\\$\rightsquigarrow$
\\

\noindent\resizebox{.95\textwidth}{!}{
$
\infer[\with_R]{\pnbang{}\Gamma,\Delta_1,\Delta_2\lra_{\max\{a+c,a+d\}} C\with D}
	{
	\infer[cut_\ell]{\pnbang{}\Gamma,\Delta_1,\Delta_2\lra_{a+c} C}
		{  \pnbang{}\Gamma,\Delta_1,\lra_a A &
		\pnbang{}\Gamma,\Delta_2,A\lra_c C
		} &
	\infer[cut_\ell]{\pnbang{}\Gamma,\Delta_1,\Delta_2\lra_{a+d} D}
		{  \pnbang{}\Gamma,\Delta_1,\lra_a A &
		\pnbang{}\Gamma,\Delta_2,A\lra_d D
		}
	}		
$}
\\

For the minimality, let $p,q$ be distinct propositional variables. For any $a,b\in\real^+$ we have proofs of $\pnbang{a}p\lraS{a}p\quad\text{and}\quad p,\pnbang{b}q\lraS{b}p\tensor q$.  Applying cut, we get $\pnbang{a}p,\pnbang{b}q\lraS{c} p\tensor q$. Now, $\pnbang{a}p,\pnbang{b}q\lraS{c} p\tensor q$ is provable (without cut) only if $a+b\leq c$. Hence if $f$ makes the cut rule admissible, $a+b\leq f(a,b)$.
\end{proof}

One can easily show that also  weakening in the antecedent is admissible in~$\laIMALLR$ and does not lead to an increased label. Similarly, generalized axioms $\Gamma,A\lra_0 A$ are admissible: Appearing both positively and negatively, $A$ does not contain modal operators, and hence $\costs{\Gamma,A\lra A}=\costs{A\lra A}=0$.
\begin{example} \label{ex:tsystem}
Consider a labelled transition system $(T,\rede{})$
where $T$ is a set of states and $\rede{} \subseteq T \times \realP \times T$ is the transition relation on states. In $(t_i,a_i,t_i')\in \rede{}$, simply written as $t_i \rede{a_i}t_i'$,   $a_i$ is interpreted as the time needed for the transition to happen. 
We  use distinct propositional variables to 
represent states. Moreover, 
the  formula $\pnbang{a_i}(t_i\limp t_i')$  models the transition $t_i \rede{a_i}t_i'$. We shall use $\Delta_{\rede{}}$ to denote the set of such formulas. 
Given two sets of states $S_{start},S_{end}\subseteq T$, it is easy to see that the following sentences are equivalent:
\begin{enumerate}
\item From every state in $S_{start}$, there is a state in $S_{end}$ reachable in time $\leq a$
\item $\wins{}{\GAIMALLR} (\{\Delta_{\rede{}}, \bigoplus S_{start}\lra \bigoplus S_{end}\},a)$
\end{enumerate}
Hence by Theorem \ref{theorem:adeq2}, both are equivalent to
\begin{enumerate}\setcounter{enumi}{2}
\item $\vdash_{\laIMALLR} \Delta_{\rede{}}, \bigoplus S_{start}\lra_a \bigoplus S_{end}$.
\end{enumerate}

One common way to obtain (1) is by finding a set of intermediary states $S_i$ and a splitting of the time $a_1+a_2= a$ such that we can go from each state in $S_{start}$ to some state in $S_i$ in time $a_1$, and from each state in $S_i$ to some state in $S_{end}$ in time $a_2$. In terms of (3), this strategy corresponds to a cut: Assume we have proofs~$\Xi_1$ and $\Xi_2$ of the sequents~$\Delta_{\rede{}}, \bigoplus S_{start}\lra_{a_1} \bigoplus S_i$ and~$\Delta_{\rede{}}, \bigoplus S_i\lra_{a_2} \bigoplus S_{end}$. By cut admissibility (Theorem \ref{thm:cutAdm}) we obtain the desired~$\Delta_{\rede{}}, \bigoplus S_{start}\lra_{a_1+a_2} \bigoplus S_{end}$ as the result of the ``concatenation''of the paths encoded in $\Xi_1$ with the paths encoded in $\Xi_2$.
\end{example}

\section{Alternative cost structures}
\label{sec:general}

We have used non-negative real numbers  for representing costs and budgets, together with   basic operations  for {\em accumulating} ($+$)  and {\em comparing} ($\geq$) them. 
This allowed us  to give 
a more interesting perspective of \emph{resource  consumption} in linear logic: we know that the cost of using  a formula  marked with  cost $3$ is not the  same  as  derelicting a formula marked with  cost $7$. 
A natural question that arises is whether it is possible to consider other systems 
governing the way we understand \emph{costs} and \emph{budgets}. In this section, we consider sequent systems $\laIMALLA$ in which the real numbers of $\laIMALLR$ (see Fig. \ref{fig:lll}) are replaced by elements of a semiring $\cK$. As we shall see, the structure of $\cK$ determines the  behavior of the system and the interpretation of costs and budgets.

\newcommand{\eA}{1_{\mathcal{A}}}

A {\em commutative semiring} is a tuple $\cK = \langle \cA,\plusA,\timesA$ $, \botA,\topA\rangle$ satisfying:
 (S1) $\cA$ is a set and $\botA,\topA \in \cA$;
 (S2)  $\plusA$ and $\timesA$ are binary operators that make the triples $\langle A, \plusA , \bot_\cA\rangle$ and 
  $\langle A, \timesA , \top_\cA\rangle$ commutative monoids;
  (S3)   $\times_{\cA}$  distributes over $\plusA$ (i.e., $a \timesA (b \plusA c ) = (a\timesA b) \plusA (a \timesA c)$);
  and (S4) $\bot_\cA$ is  absorbing  for $\timesA$ (i.e., $a  \timesA \bot_\cA = \bot_\cA$);  
 $\cK$ is {\em absorptive} if it additionally satisfies (S5)
 $a \plusA (a \timesA b) = a$; in absorptive semirings, $\plusA$ is idempotent, that is, $a\plusA a=a$. This allows for the definition of the following {\em partial order}:  $a \leqA b$ 
  iff $a\plusA b = b$
  (and then,  $a \timesA b \leqA a$);
    an absorptive semiring $\cK$ is {\em idempotent} whenever its $\timesA$ operator is idempotent.

Absorptive semirings satisfy some additional properties \cite{DBLP:conf/ecai/BistarelliG06}: 
  $\botA$ (resp. $\topA$) is the bottom (resp. top) of $\cA$; 
 $a \plusA \topA = \topA$; 
 $\plusA$ coincides with the $\lubA$ (least upper bound) operator;
 if $a\plusA b \in \{a,b\},\,\forall a,b\in\cA$ then ($\cA, \leqA$) is a total order; 
  $a \timesA b \leqA \glbA(a,b)$, where $\glbA$ is the greatest lower bound operator;
  if   $\cK$ is idempotent, then $\plusA$ distributes over $\timesA$ and  $\timesA$ coincides with $\glbA$.

We  identify  costs as elements of $\cA$.
We can naturally 
consider $\topA$ (resp. $\botA$) 
as
the ``best'' (resp. ``worst'') cost.
Dually,   $\topA$ (resp. $\botA$) is the ``worst'' (resp. ``best'') budget. 
Also, we expect the {\em accumulating} operator to be commutative and associative (S2). 
Moreover, 
{\em accumulating} costs gives rise to a ``worse'' cost (S5). Hence, the $\times_{\cA}$ operator is used to combine costs ($+$, on $\real^+$,  in Fig.  \ref{fig:lll}).
On the other hand,    $+_{\cA}$ is used to select which is the ``best'' value,   in the sense that $a\plusA b = a$ iff $b \leqA a$ iff $a$ is ``better'' than $b$ (i.e.,  $\preceq_\cA$ will replace  $\geq$ in Fig.  \ref{fig:lll}). 
Finally, we generalize   $\max$ (in Fig.  \ref{fig:lll}) 
as $\glbA$.
 As mentioned above,  in the case of idempotent semirings, $\timesA$ coincides with the $\glbA$ while in the non-idempotent  case accumulating costs often gives a ``worse'' result than the $\glbA$.

Note that the rules $\vnbang{a}_L$ and  $\pnbang{a}_L$,  in Fig. \ref{fig:lll},
the budget $c$ in the conclusion must be of the form $a+b$. In the particular case of $\real^+$, 
we know that 
$b = c -a$ whenever $c \geq a$. Hence,  from a conclusion with   budget $c$  we obtain a premise with {\bf decreased} budget $c-a$.  
In the general case,
we guarantee that such splitting of the budget (also present in rules  $\otimes_R$ and $\limp_L$)  is possible  
by requiring $\cK$ to be invertible in the following sense:
$\cK$ is 
{\em invertible} if for all $b\leqA a$, the set  $\cI(b,a)= \{x \in \cA \mid a \timesA x =b\}$ is non-empty and admits a minimum. We then denote this minimum by $b \divA a$.
Observe that,  in all our examples, if $b \leqA a $ then the set $\cI(b,a)$   is a singleton except when $a=b=\botA$. In that case,   $\cI(b,a) = \cA$ and   we set $\botA \divA \botA = \botA$.
In Remark \ref{rem:bot} we explain and clarify this choice.

In what follows,  $\cK$ will always denote an absorptive and invertible 
semiring.

\begin{definition}[System $\laIMALLA$]\label{def:new-system}
Let $\cK=\langle \cA,\plusA,\timesA, \botA,\topA\rangle$ be an absorptive and invertible semiring. 
The system $\laIMALLA$ is obtained from $\laIMALLR$ (Fig. \ref{fig:lll}) by replacing $0$ with $\topA$, $+$ with $\times_\cA$, $\max$ with $\glb_{\cA}$, and $\geq$ with $\preceq_{\cA}$. Similarly, we obtain   $\aIMALLK$
as a generalization of $\aIMALLR$ (Fig. \ref{fig:ll}). 
\end{definition}
Just as $\aIMALLR$ can be seen as a fragment of $\aSELLR$, the system $\aIMALLK$ is a fragment of $\aSELLA$, i.e. affine subexponential linear logic with subexponentials taken from the set $\mathcal{K}\times\{\mathfrak{u},\mathfrak{b}\}$. We omit the (rather straightforward) formulation of the corresponding game semantics.

Next we present some instances of $\laIMALLA$
and their intended behavior.

\begin{example}[Costs]\label{ex:prices}
 $\cK_c = \langle  \cRpi, \min_\cR, +_\cR, \infty, 0  \rangle$, where $\cRpi$ is the completion of $\cRp$ with $\infty$, reflects  the meaning of costs   in Section \ref{sec:costs}. If $a,b\not=\infty$ and $b \geq a$ (i.e., $b \preceq_\cA a$), 
 there is a  unique way of splitting $b$
 into $a + b'$, namely, $b' = b -a$ (i.e., $b' = b \div_\cA a$).
Alternatively,  we may interpret the elements in $\cK_c$ as 2D areas. Then, a label $b\neq \infty$ in a sequent  can be understood as the total area available to place some objects. Each time an object of size $a$ is placed (using $\pnbang{a}{}_L$ or $\vnbang{a}{}_L$)  we observe, bottom-up,   that  the  total area is decreased to $b- a$. 
\end{example}
  \begin{remark}[Meaning of $\div$ and $\infty$]\label{rem:bot} 
  Consider the semiring   $\cK_c$ above (where $\botA = \infty$ and $\topA = 0$). 
If the  label in the sequent is $b=\botA$, regardless the value $a$ in an application of $\vnbang{a}_L$ or  $\pnbang{a}_L$,
the premise will be labelled with $\infty$. This is because, according to our definition, $\botA \div \botA = \botA$. This makes sense since we select the most ``generous'' budget to continue the derivation. Of course, smaller suitable budgets  are also allowed due to  rule $w_{\ell}$. 
For instance, the sequent $\vnbang{\bot}p, \vnbang{\bot}(p \limp q) \lra_b q$
is provable in $\laIMALLAC$  only if $b=\botA$. The same sequent (removing the label $b$) is also provable in \aSELLA. 
Note that if we decree that  $\botA \div \botA = \topA$ (as in  \cite{DBLP:conf/ecai/BistarelliG06}),
then the sequent above would not be provable for any $b$.  
\end{remark}

\begin{example}[Protected resources]
Let $\cK_{c/p}=\langle \{\public,\confidential\}, +, \times, \public, \confidential \rangle$
and define $a+b = \public$ iff $a = b = \public$ and $a \times b = \confidential$ iff 
$a=b=\confidential$. 
The intuition is that 
 $\pnbang{\public}F$ represents {\em public}  information  (and then not confidential) and  
$\pnbang{\confidential}F$ represents {\em secret} information. 
Observe that no derivation of $\Gamma, \pnbang{\public}F \lra_{\confidential} G$ can apply $\pnbang{}_L$ on $\pnbang{\public}F$ (since $ \confidential\not\preceq \public$). 
 This means that only confidential (or protected) resources can be used in such a derivation. 
	Alternatively, we can show that, if  $\Gamma \lraS{\confidential} G$ is provable then 
  $\Gamma' \lraS{\confidential} G'$ is also provable where $\Gamma'$ is as $\Gamma$ but replacing any formula of the form 
	$\pnbang{\public}F$ with the constant $\one$ (similarly for $G$ and $\vnbang{\public}F$). 
$\cK_{c/p}$  is nothing less that the structure
$S_c = \langle \{\false,\true\}, \vee,\wedge,\false,\true \rangle$ \cite{DBLP:journals/constraints/BistarelliMRSVF99}.
\end{example}

\begin{example}[Maximum amount of resources] Consider 
now the situation where labels in sequents  represents a certain amount of computational resources, e.g., RAM, 
	available to process a series of tasks. Moreover, 
	let us interpret $\vnbang{c}F$ as the fact that, 
	  in order to  produce  $F$, $c$ resources need to be used.
	As expected, once $F$ is produced, the $c$ resources can be released and  freed to be used in other tasks. The idea is to know what is the least amount of resources $b$ s.t. some  jobs $\Gamma$   can be all of them executed, sequentially if needed. 
	  
     Consider    $\cK_{\max} = \langle  \cRpi, \min, \max, \infty, 0  \rangle$ 
where  $b \div a = b$ (if $b \geq a$). Let $t_1, t_2$ be atomic propositions representing tasks and let $\Gamma = \{\vnbang{a}t_1, \vnbang{c}t_2\}$. Clearly, 
the sequents $\Gamma \lra_b t_1 \otimes t_2$
and $\Gamma \lra_b t_1 \with t_2$
are both provable if 
$b =\max(a,c)$. Of course, if we start with more resources, $e.g., b=a+c$, the sequent is still provable (rule $w_{\ell}$). 
     Interestingly, from the point of view of costs, the difference between concurrent  ($\otimes$) and 
     sequential choices ($\with$) vanishes in this particular scenario, since $\cK_{\max}$ is idempotent (and hence $\glbA$ and $\timesA$ coincide). 
\end{example}

\begin{example}[Transition systems revisited]
Consider the 
formulas of the shape $\pnbang{a_i}(t_i\limp t_i')$
and the sequent $\Delta_{\rede{}}, t\lra_b t'$
in  Example \ref{ex:tsystem}. 
The interpretation there, of   $b$  as the time needed to observe a transition from $t$ to $t'$,  can be captured with the semiring 
$\cK_c$ (Example \ref{ex:prices}). As expected, according to $\plusA$ (and then $\leqA$), we prefer ``faster'' paths when there are different ways of going from $t$ to $t'$. 

 Another possible interpretation for $b$  is the {\em probability} of the different \emph{independent} events (transitions) to happen. Hence, given a specific path from $t$ to $t'$, the possible values for   $b$ must be less or equal to the product of the probabilities $a_i$ involved in that path. 
 This behavior can be captured with the probabilistic semiring \cite{DBLP:journals/constraints/BistarelliMRSVF99} 
$\cK_p = \langle  [0,1], \max, \times, 0 ,1  \rangle$. 

For yet another example of $\cK_p$, consider the typical probabilistic choice in process calculi: the process $P +_{\alpha} Q$   chooses $P$ with probability $\alpha$ and $Q$ with probability $1-\alpha$. Following the process-as-formulas interpretation 
\cite{DBLP:conf/elp/Miller92,DBLP:journals/mscs/DengSC16}, relating  process constructors with logical connectives and reductions with proof steps, the system $\laIMALLAP$  offers a very natural interpretation of the process $P +_{\alpha} Q$ as the formula
	$(\vnbang{\alpha}P) \with (\vnbang{1- \alpha}Q)$, that we can write as $P \with_\alpha Q$. For instance, if $\Gamma = \{t_1 \with_\alpha t_2, t_1 \limp t_3, t_2\limp t_4\} $, 
	then, 
	the sequent  $\Gamma\lra_b t_3$ 
	(resp.  $\Gamma\lra_{b} t_4$) is  provable
	whenever $\alpha \geq b$ (resp. $ 1- \alpha \geq b $). 
\end{example}

\section{Modalities in positive contexts}
\label{section:pos}

We have considered modalities appearing only in negative polarity. 
In this section, we show some problems and limitations
that arise when trying to extend the labelled sequent approach to consider also positive occurrences of modalities 
as in the full system of subexponential linear logic (see e.g., \cite{nigam09ppdp}).
Let us call 
\laSELLR  the system resulting from \laIMALLR~ by 
adding the following \emph{labelled promotion rules}
\[
\infer{\Gamma\lra_b \vnbang{a} A}{\Gamma^{\leqvn{a}}\lra_b A}\qquad
\infer{\Gamma\lra_b \pnbang{a} A}{\Gamma^{\leqpn{a}}\lra_b A}
\]
where $\Gamma^{\leqvn{a}}$ denotes all formulas in $\Gamma$ which are of the form $\vnbang{c} B$ or $\pnbang{c} B$  and $a \geq c$; and 
$\Gamma^{\leqpn{a}}$ 
denotes all formulas in $\Gamma$ which are of the form $\pnbang{c} B$ where $a \geq c$. These rules follow the standard formulation of the promotion rule in  $\SELL$: the promotion of $!^{a}A$ requires all formulas of the context to be of the form $!^{c}B$
where $a \preceq c$  and $\preceq$ is the underlying  preorder on the subexponential signature.

The following result shows that it is not possible to define a labelled cut rule for \laSELLR
where the label of the conclusion depends exclusively on the labels of the premises.
\begin{theorem}\label{thm:impossible} There is no function $f:\real^+\times \real^+\rightarrow\real^+$ such that the rule
\[
\infer[cut]{\pnbang{}\Gamma,\Delta_1,\Delta_2\lra_{f(a,b)} C}
	{ \pnbang{}\Gamma,\Delta_1\lra_a A &
	\pnbang{}\Gamma,\Delta_2,A\lra_b C
	}
\]
is admissible in $\laSELLR$.
\end{theorem}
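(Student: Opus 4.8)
The plan is to assume, for contradiction, that such an $f$ exists, and to exhibit a family of cut instances indexed by $n\in\mathbb{N}$ whose two premises are provable with the \emph{fixed} labels $a=b=1$, but whose conclusion is provable only with a label $\geq n$. Since $f(1,1)\in\real^+$ is a single number, whereas admissibility together with $w_\ell$ (a labelled sequent is provable at $\ell$ iff $\ell$ dominates its minimal label) would force $f(1,1)\geq n$ for every $n$, this yields the contradiction. The feature I want to exploit is that labelled promotion \emph{preserves} the label: producing the permanent resource $\pnbang{c}p$ costs no more than producing a single $p$, even though $\pnbang{c}p$ may afterwards be consumed arbitrarily often.

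Concretely, for each $n$ I would take the cut formula to be $A=\pnbang{1/n}p$ and let $\Delta_1$ be the ``chain''
\[
\Delta_1=\{\,\pnbang{1/n}(\one\limp q_1),\ \pnbang{1/n}(q_1\limp q_2),\ \dots,\ \pnbang{1/n}(q_{n-1}\limp p)\,\}
\]
over pairwise distinct fresh atoms $q_1,\dots,q_{n-1}$ (set $q_n:=p$), with $\pnbang{}\Gamma=\Delta_2=\emptyset$. Deriving a single $p$ from $\Delta_1$ requires derelicting all $n$ links (each of price $1/n$) and composing the implications, for total cost $n\cdot\tfrac1n=1$; since every link carries index $1/n$, promotion applies and, preserving the label, gives the left premise $\Delta_1\lra_{1}\pnbang{1/n}p$, so $a=1$. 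For the right premise $\pnbang{1/n}p\lra_{b} p\tensor\cdots\tensor p$ ($n$ factors), the banged cut formula is copied into each branch of $\tensor_R$, where one dereliction (price $1/n$) proves $p$; as $\tensor_R$ sums labels, the total is again $n\cdot\tfrac1n=1$, so $b=1$. The conclusion produced by the cut is $\Delta_1\lra p\tensor\cdots\tensor p$, where $p$ can only be obtained by traversing the entire chain (cost $1$) and $\tensor_R$ forces $n$ \emph{independent} such derivations whose labels accumulate; hence its minimal label is $n$.

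The step I expect to be the real work is the \emph{lower bound}: that $\Delta_1\lra p\tensor\cdots\tensor p$ truly cannot be proved below label $n$. Here I would argue that, because the $q_i$ are distinct and occur nowhere else, the only way to introduce an occurrence of $p$ is through the last link, which recursively forces the whole chain and thus cost exactly $1$ per occurrence, and that the $n$ occurrences demanded by the multiplicative $p\tensor\cdots\tensor p$ must be generated separately, so that the additive behaviour of $\tensor_R$ on labels makes them sum to $n$ (and $w_\ell$ can only raise labels). Granting this, the family with $(a,b)=(1,1)$ and conclusion-label $n\to\infty$ contradicts the existence of $f$, completing the proof. Note that the underlying \emph{unlabelled} sequent is perfectly provable, since cut is admissible in $\aSELLR$; the obstruction is solely the impossibility of \emph{bounding the label} as a function of $a$ and $b$, and it stems precisely from label-preserving promotion making the permanent resource ``too cheap'' relative to its unbounded use.
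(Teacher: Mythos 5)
Your proposal is correct and rests on exactly the same mechanism as the paper's proof: a banged cut formula whose production cost is fixed by label-preserving promotion but which can be reused unboundedly often in the right premise, so that for fixed premise labels one obtains a family of cut instances whose conclusions require arbitrarily large labels. The paper's witness is just more compact---it cuts on $\pnbang{1/k}p^{\otimes (k\cdot a)}$ between $\pnbang{1/k}p\lra_{a}\pnbang{1/k}p^{\otimes (k\cdot a)}$ and $\pnbang{1/k}p^{\otimes (k\cdot a)}\lra_b p^{\otimes(k\cdot k\cdot a\cdot b)}$, whose conclusion needs label $k\cdot a\cdot b$---whereas you normalize the left premise to cost $1$ via an implication chain, but the argument is essentially identical.
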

\begin{proof}
Let $p,q$ be different propositional variables, and let $A^{\tensor n}$ denote the $n$-fold multiplicative conjunction of a formula $A$. The sequents
\[\pnbang{1/k}p\lra_{a}\pnbang{1/k}p^{\otimes (k\cdot a)}\qquad\text{and}\qquad \pnbang{1/k}p^{\otimes (k\cdot a)}\lra_b p^{\otimes(k\cdot k\cdot a\cdot b)} 
\]
are provable in $\laSELLR$ for all natural numbers $a,b,k$. The smallest label~$f$ which makes their cut conclusion
$ \pnbang{1/k}p\lra_f p^{\otimes(k\cdot k\cdot a\cdot b)}
$ 
provable in~$\laSELLR$ is~$k\cdot a\cdot b$, which is not a function on the premise labels~$a,b$.
\end{proof}
Note that Thm. \ref{thm:impossible} leaves open the possibility that cut is admissible w.r.t. a function $f$ which takes more information of the premises into account than just their labels. Please refer to the appendix   for a more detailed discussion. 
\section{Concluding remarks and future work}\label{sec:conc}

We have introduced game semantics for fragments of (affine intuitionistic) linear logic with subexponentials (\SELL~\cite{danos93kgc,nigam09ppdp,DBLP:journals/tplp/PimentelON14}), culminating in labelled extensions of such systems so that
$\Gamma \lra_{b} A$ is interpreted as: ``Resource $A$ can be obtained from the resources $\Gamma$ with a budget $b$'' or, alternatively, ``The budget $b$ suffices to win the game $\Gamma \lra A$''.  For achieving that, we proposed a new interpretation for the {\em dereliction} rule, opposing to the standard controls in the {\em promotion rule}:  derelicting on $\vnbang{a}B,\pnbang{a}B$ means ``paying $a$ to obtain (a copy of) $B$''. Hence our games and systems offer a neater  control of the resources appearing {\em negatively} on sequents. 

There are several ways of extending and continuing this work. First of all, as signalized in Sec.~\ref{section:pos}, the quest of extending the cost conscious reasoning to modalities occurring {\em positively} in sequents is not trivial. Despite the obvious game interpretation of promotion that could be given in the style of~\cite{DBLP:conf/tableaux/FermullerL17}, Thm.~\ref{thm:impossible} shows that this would {\em not} be followed with a proof theoretical notion of cut-elimination, due to the impossibility of defining a functional notion of the cut-label. In the appendix we discuss some possible paths to trail in this direction. On the other side, a philosophical discussion on the need of compositionally of dialogue games driven by a cut rule can also be done~\cite{dutilh18}. 

Finally, we expect that the study of costs of proofs and cut-elimination in labelled fragments of \SELL\ may indicate a relationship between labels and bounds of computation~\cite{DBLP:journals/pacmpl/AccattoliGK18}, as well as give a different approach to study the complexity of cut-elimination process, specially in the multiplicative-(sub)exponential fragment~\cite{DBLP:journals/tcs/Strassburger03,DBLP:journals/tocl/StrassburgerG11}.

\newpage
\appendix
\section{Appendix}

\subsection{Cut elimination for $\laSELLK$}\label{app:cut-elim}
Consider the labelled sequent system $\laSELLK$ built from 
$\laIMALLA$ (see Def.~\ref{def:new-system}) by dropping the restriction on occurrences  of modalities and  adding the following 
  \emph{labelled promotion rules}
(see Section \ref{section:pos})
\[
\infer{\Gamma\lra_b \vnbang{a} A}{\Gamma^{\leqvnk{a}}\lra_b A}\qquad
\infer{\Gamma\lra_b \pnbang{a} A}{\Gamma^{\leqpnk{a}}\lra_b A}
\]
where $\Gamma^{\leqvnk{a}}$ denotes all formulas in $\Gamma$ which are of the form $\vnbang{c} B$ or $\pnbang{c} B$  and $a \preceq_\cA c$; and 
$\Gamma^{\leqpnk{a}}$ 
denotes all formulas in $\Gamma$ which are of the form $\pnbang{c} B$ where $a \preceq_\cA c$.
 We shall explore 
different fragments and (admissible) cut-like rules that can be proposed for such a calculus. For concreteness, we 
consider the case $\cK = \cK_c$ (Example \ref{ex:prices}). But note, however, the discussion below applies for any absorptive, invertible semiring $\cK$. 

We start by observing that the inclusion of ``worse costs''  ($\infty$ in the reals, $\botA$ in the semiring) entails a trivial labelling 
that makes cut admissible. In the following theorem,  the cut formula $F$ is an arbitrary formula (containing, possibly, positive and/or negative occurrences of 
the modalities $\pnbang{a}$ or $\vnbang{a}$). 

\begin{theorem}[$cut_\infty$ Rule]
The following rule is admissible in $\laSELLK$
\[
\infer[cut_\infty]{\pnbang{}\Gamma,\Delta_1,\Delta_2 \lraS{\infty} C}{
 \deduce{\pnbang{}\Gamma,\Delta_1 \lraS{a} F}{}&
  \deduce{\pnbang{}\Gamma,\Delta_2, F \lraS{b} C}{}
 }
\]
\end{theorem}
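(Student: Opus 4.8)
The plan is to reduce the admissibility of $cut_\infty$ to cut-admissibility in the underlying \emph{unlabelled} calculus $\aSELLA$, exploiting the fact that $\infty=\botA$ is the worst possible label and hence propagates unchanged through every rule of $\laSELLK$. The key auxiliary observation I would establish first is the following \textbf{decoration fact}: whenever a sequent $S$ is provable in $\aSELLA$, the labelled sequent $S$ with label $\botA$ is provable in $\laSELLK$. This is proved by induction on a given $\aSELLA$-proof of $S$, relabelling every sequent in it by $\botA$. At each leaf an initial sequent is produced with label $\topA$; a single application of $w_\ell$ turns $\topA$ into $\botA$, which is legitimate because $\botA\preceq_\cA\topA$ ($\botA$ is the bottom of $\preceq_\cA$). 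For the inductive step one checks that every rule of $\laSELLK$ yields label $\botA$ from premises labelled $\botA$: the rules that combine labels multiplicatively ($\tensor_R,\lolli_L,\pnbang{}_L,\vnbang{}_L$) do so because $\botA$ is absorbing for $\timesA$ (axiom (S4)), the rules taking a greatest lower bound ($\with_R,\oplus_L$) because $\glbA(\botA,\botA)=\botA$, and the remaining unary rules together with the two promotion rules simply copy the label.

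With the decoration fact in hand the argument is short. Given $\laSELLK$-proofs of $\pnbang{}\Gamma,\Delta_1\lraS{a}F$ and $\pnbang{}\Gamma,\Delta_2,F\lraS{b}C$, I would first erase all labels and all $w_\ell$-steps (the skeleton of each proof) to obtain $\aSELLA$-proofs of $\pnbang{}\Gamma,\Delta_1\lra F$ and $\pnbang{}\Gamma,\Delta_2,F\lra C$. Since $\aSELLA$ is affine subexponential linear logic over the signature $\mathcal{K}\times\{\mathfrak{u},\mathfrak{b}\}$, whose order places no bounded subexponential above an unbounded one, it enjoys cut-elimination; an ordinary cut on the formula $F$ therefore gives an $\aSELLA$-proof of $\pnbang{}\Gamma,\Delta_1,\Delta_2\lra C$. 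Applying the decoration fact to this proof produces a $\laSELLK$-proof of $\pnbang{}\Gamma,\Delta_1,\Delta_2\lraS{\infty}C$, which is precisely the conclusion of $cut_\infty$, establishing its admissibility.

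I expect the routine part to be the case analysis verifying the decoration fact, and the only genuine content to be the appeal to cut-elimination in the unlabelled system $\aSELLA$ (used here as a black box for an arbitrary cut formula $F$, which may carry modalities of both polarities). The conceptual subtlety to flag is that this theorem deliberately discards all information about the premise labels $a,b$: because the conclusion is labelled by the absorbing element $\botA$, the budget bookkeeping collapses completely, and this is exactly why $cut_\infty$ evades the impossibility of Theorem~\ref{thm:impossible}. There no \emph{functional} dependence of the conclusion label on $a,b$ existed, whereas here the constant assignment $\infty$ trivially works. I would close by noting that the proof uses nothing specific to $\cK_c$ beyond $\botA$ being absorbing for $\timesA$ and the bottom of $\preceq_\cA$, so the same argument goes through verbatim for any absorptive, invertible semiring $\cK$, with $\botA$ in place of $\infty$.
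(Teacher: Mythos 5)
Your proof is correct, but it takes a genuinely different route from the paper's. The paper proves $cut_\infty$ by running the standard \SELL\ cut-elimination procedure \emph{directly inside the labelled calculus}, extending invertibility and permutability of rules to labelled sequents and checking that each reduction step is compatible with the label $\infty$. You instead factor the problem through the unlabelled system: strip the labels (the skeleton), invoke cut-admissibility of $\aSELLA$ as a black box on the arbitrary cut formula $F$, and then re-decorate the resulting cut-free proof uniformly with $\botA$, which is legitimate precisely because $\botA$ is absorbing for $\timesA$, is a fixed point of $\glbA$, and is reachable from $\topA$ by $w_\ell$. Your decoration lemma is easy to verify against every rule of $\laSELLK$ (including the labelled promotions, which merely copy the label and whose side conditions concern only the modality annotations, not the sequent label), so the argument is sound. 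What your approach buys is modularity and an explicit explanation of \emph{why} the constant label $\infty$ evades Theorem~\ref{thm:impossible}: the absorbing bottom collapses all budget bookkeeping, so no label tracking through cut reductions is needed, and the argument transfers verbatim to any absorptive invertible semiring with $\botA$ in place of $\infty$. What the paper's direct method buys is the labelled permutability/invertibility machinery that is actually needed for the subsequent, non-trivial labellings ($cutL$ and $cut_{LL}$), where the conclusion label genuinely depends on the premise labels and a detour through the unlabelled calculus would lose exactly the quantitative information one wants to preserve.
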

The proof follows the same steps of the cut-elimination proof for \SELL, 
using natural extensions of invertibility and permutability of rules to the labelled case.

It is worth noticing that the sole responsible for the impossibility result of Thm.~\ref{thm:impossible} is the explosive combination of the use of tensor/implication and contraction, that is, \SELL's multiplicative-(sub)exponential fragment. Hence, limiting the occurrence of one or the other leads to more amenable results.
For example, Thm. \ref{thm:cutAdm} can be straightforwardly 
extended for 
formulas not containing the modality $\pnbang{a}$ (but $\vnbang{a}$ may occur). 

\begin{theorem}[Linear cuts]
Let $F$ be a formula with no occurrences of 
$\pnbang{a}$. Then, the following rule is admissible in $\laSELLK$
\[
\infer[cutL]{\pnbang{}\Gamma,\Delta_1,\Delta_2 \lraS{a + b} C}{
 \deduce{\pnbang{}\Gamma,\Delta_1 \lraS{a}F}{}&
 \deduce{\pnbang{}\Gamma,\Delta_2, F \lraS{b}C}{}&
}
\]
Moreover, if $\Gamma \lraS{a} C$ is provable using cutL, then there is a cut-free proof of 
$\Gamma \lraS{a'}C$ with $a \geq a'$.
\end{theorem}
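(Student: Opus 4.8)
The plan is to establish admissibility of $cutL$ by the standard cut-elimination argument for subexponential linear logic, lifted to the labelled setting exactly as in the proof of Theorem~\ref{thm:cutAdm}, and to exploit the hypothesis that $F$ is free of $\pnbang{a}$ in order to keep the cut formula \emph{linear} throughout the reduction. Concretely, I would argue by the usual lexicographic induction on the pair (complexity of the cut formula $F$, sum of the heights of the two premise derivations). Because $F$ contains no occurrence of $\pnbang{a}$, and cut reduction turns $F$ into cut formulas that are subformulas of $F$ in either polarity, every cut formula arising during the reduction is again $\pnbang{a}$-free; hence the dereliction $\pnbang{}_L$ — the only rule capable of duplicating an antecedent formula — is never the principal rule acting on a cut formula. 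This is precisely the ingredient that fails in the setting of Theorem~\ref{thm:impossible}, where a positive $\pnbang{}$ inside the cut formula triggers promotion on the left premise and contraction-carrying dereliction on the right, multiplying the cost.

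First I would dispatch the axiom and commutative cases. When one premise is an initial sequent, or when $F$ is not principal in one of the premises, the cut permutes upward and the induction hypothesis applies; permuting past a $\pnbang{}_L$ acting on a \emph{context} formula is harmless, since it never duplicates the cut formula itself. In each instance the conclusion label is obtained by the same label operation the permuted-over rule already uses, so the bound $a+b$ is met (and only ever \emph{loosened} by $w_\ell$). The principal cases for the propositional connectives $\tensor,\limp,\with,\oplus$ and the units are handled exactly as in Theorem~\ref{thm:cutAdm}: each reduction replaces the cut by one or two cuts on immediate subformulas, and the displayed label arithmetic there (e.g.\ $\max\{a+c,a+d\}=a+\max\{c,d\}$) carries over verbatim.

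The one genuinely new case, and the main obstacle, is $F=\vnbang{c}G$. Here the left premise $\pnbang{}\Gamma,\Delta_1\lraS{a}\vnbang{c}G$ must end in the promotion rule, so its premise is $(\pnbang{}\Gamma,\Delta_1)^{\leqvnk{c}}\lraS{a}G$ over a \emph{restricted} context, while the right premise ends in $\vnbang{}_L$, with premise $\pnbang{}\Gamma,\Delta_2,G\lraS{b'}C$ where $b=b'+c$. Since $\vnbang{}_L$ \emph{consumes} its principal formula, a single cut on $G$ suffices: the induction hypothesis (legitimate, as $G$ is smaller and still $\pnbang{a}$-free) yields a cut-free proof with label $a+b'$ over a context contained in $\pnbang{}\Gamma,\Delta_1,\Delta_2$. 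I would then restore the formulas of $\Delta_1$ dropped by promotion using admissible weakening, and raise the label from $a+b'$ to $a+b'+c=a+b$ by a final $w_\ell$. The delicate points to verify are that admissible weakening does not increase the required label, and that the shared unbounded context $\pnbang{}\Gamma$, whose contraction is managed as in ordinary \SELL\ cut-elimination, never spawns a second copy of the cut — both of which hold exactly because the cut formula itself carries no $\pnbang{}$.

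For the \emph{moreover} clause, I would iterate the admissibility to erase all $cutL$ instances from a given proof of $\Gamma\lraS{a}C$. Each reduction step leaves the conclusion label unchanged or, as in the $\vnbang{c}G$ case, produces a strictly smaller label that we are free \emph{not} to weaken back up; thus the label is non-increasing along the elimination, and the resulting cut-free proof proves $\Gamma\lraS{a'}C$ for some $a'\leq a$. This is consistent with the minimality already recorded for the decoration function $\mathcal{D}$ (Corollary~\ref{cor:spectrum}): the fixed label $a+b$ imposed by $cutL$ is only an upper bound, whereas the genuinely cut-free derivation may realize the smaller optimal cost.
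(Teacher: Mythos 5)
Your proposal is correct and follows essentially the same route as the paper: a standard cut-reduction argument whose only genuinely new case is $F=\vnbang{c}G$, where promotion on the left meets $\vnbang{}_L$ on the right and the single resulting cut on $G$ yields the (possibly smaller) label $a+b'$ with $b=b'+c$, exactly the reduction the paper displays. The paper presents only this case and leaves the rest implicit, so your additional bookkeeping (weakening back the context restricted by promotion, the non-increase of labels for the \emph{moreover} clause) fills in details consistent with its proof.
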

\begin{proof}
The cut-elimination procedure is rather standard. Let us present the case when the cut formula is $\vnbang{c}F$: 
\[
\infer[cutL]{\pnbang{}\Gamma,\Delta_1,\Delta_2\lraS{a+b+c} C}{
 \infer{\pnbang{}\Gamma,\Delta_1 \lraS{a} \vnbang{c}F}{(\pnbang{}\Gamma,\Delta_1)^\leqvn{c} \lraS{a} F}
 &
 \infer{\pnbang{}\Gamma,\Delta_2, \vnbang{c}F \lraS{b+c} C}{\pnbang{}\Gamma,\Delta_2, F \lraS{b} C}
}
\]
reduces to
\[
\infer[cutL]{\pnbang{}\Gamma,\Delta_1, \Delta_2\lraS{a+b} C}{
 \deduce{(\pnbang{}\Gamma,\Delta_1)^\leqvn{c}  \lraS{a} F}{}
 &
 \deduce{\pnbang{}\Gamma,\Delta_2, F \lraS{b} C}{}
}
\]
\end{proof}
Still, forcing cut formulas to be linear seems to be a very severe restriction
to impose. A better approach is given by keeping an exact track of the use of contraction in the cut-elimination process.  
The idea is that, if proving $F$ costs $a$, then any use of $F$  
must pay this ``extra cost''. In order to keep track of this extra cost, we introduce the following notation.

\begin{definition}
Let $\cE=\{a_b\mid a,b\in \cRpi\}$ be such that 
\begin{enumerate}
\item $a_b+_\cE c_d=a+b+c+d$.
\item $a_b \geq_\cE a_c$ (i.e., the ordering $\geq_\cE$ ignores the subindices).
\item $a_b >_\cE c_d$ iff $a>c$.
\end{enumerate}
For any formula $F\in\laSELLK$, we define $[F]_c$ as the formula that substitutes any 
modality $\pnbang{a_b}{}$ with $\pnbang{a_{b+c}}$.
\end{definition}
Hence $\laSELLK$ can be slightly modified so that sequent labels belong to  $\cRpi$, while modal labels belong to $\cE$. Due to the ordering above, the promotion of $\pnbang{a_0}$ 
has the same effect/constraints that the promotion of $\pnbang{a_b}$. However, the dereliction of the latter requires a greater budget ($a+b$ instead of $a$). Moreover, the equivalence $\pnbang{a_b}F \equiv \pnbang{a_c}F$ can be proven, each direction requiring a different budget.
Finally, note that $\cE_0=\{a_0\mid a\in \cRpi\}\simeq \cRpi$, that is, each element $a\in\cRpi$ can be seen as the equivalence class of $a_0$ in $\cRpi\times \cRpi$ modulo $\cRpi$.
We will abuse the notation and continue representing the resulting system by $\laSELLK$, also unchanging the representation of sequents.

The following lemma has a straightforward proof.
\begin{lemma}
If $\Gamma, [F]_c \lraS{b}G$ then 
$\Gamma, F \lraS{b'}G$
with $b \geq b'$. More generally,  if $\Gamma, [F]_c \lraS{b}C$ and $c \geq c'$ then
$\Gamma, [F]_{c'} \lraS{b'}C$ with $b \geq b'$. 
\end{lemma}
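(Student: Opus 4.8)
The plan is to establish the second, more general statement by induction on the given derivation; the first claim is then the special case $c'=0$, since $[F]_0=F$ and $c\geq 0$ holds for every $c\in\cRpi$. The whole argument rests on comparing $[F]_c$ with $[F]_{c'}$ occurrence by occurrence: replacing $[F]_c$ by $[F]_{c'}$ turns each modal descendant $\pnbang{a_{e+c}}$ into $\pnbang{a_{e+c'}}$, lowering every subindex by $c-c'\geq 0$ while leaving the first index $a$ unchanged.

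Two facts make the induction go through. First, the promotion ordering $\geq_\cE$ ignores subindices, so whether a descendant of $F$ is kept by a promotion step---that is, whether it belongs to $\Gamma^{\leqpnk{a}}$ or $\Gamma^{\leqvnk{a}}$---depends only on its (unchanged) first index; hence every promotion step, and more generally every side-condition governed by $\geq_\cE$, survives the replacement verbatim. Second, the \emph{only} label arithmetic that is sensitive to a subindex is dereliction: spending $\pnbang{a_{e+c}}$ via $\pnbang{}_L$ adds $a+e+c$ to the budget, whereas spending the replaced $\pnbang{a_{e+c'}}$ adds only $a+e+c'$, that is, $c-c'$ less. Every remaining label operation is monotone---the accumulation operator $\timesA$ (real addition in $\cK_c$) in $\tensor_R$ and $\lolli_L$, the meet $\glbA$ (real maximum) in $\with_R$ and $\oplus_L$, and the behaviour of the $w_\ell$ side-condition under lowering of labels.

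I would then run the induction on the derivation of $\Gamma,[F]_c\lraS{b}C$, applying the induction hypothesis to each immediate subderivation, performing the uniform replacement on all $F$-descendants (so that derelicting a modal subformula now releases $[A]_{c'}$ in place of $[A]_c$), and reapplying the same rule. At a dereliction of an $F$-descendant the cheaper cost $a+e+c'$ (rather than $a+e+c$) lowers the conclusion budget by a further $c-c'$; at every other rule the monotone label operations carry the inductive bound on the premise label(s) to the conclusion; and at a $w_\ell$ step one may simply retain the original, larger conclusion label, whose side-condition still holds because it held before and the premise label has only decreased. The outcome is a derivation of $\Gamma,[F]_{c'}\lraS{b'}C$ with $b'\leq b$, as required. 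I expect the main obstacle to be purely organizational: correctly tracking the occurrences descended from the distinguished $[F]_c$ as left rules decompose it, and confirming that the budget-splitting rules $\tensor_R$ and $\lolli_L$ still admit a valid division of the (lowered) conclusion budget among their premises---both of which reduce to the monotonicity of addition and maximum already noted, so no essential difficulty is expected.
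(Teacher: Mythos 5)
Your proposal is correct and is precisely the ``straightforward'' argument the paper has in mind (the paper itself omits the proof entirely, merely asserting it is straightforward): an induction over the derivation that uniformly lowers the subindices on all descendants of the distinguished formula, using that the promotion side-conditions compare only first indices via $\geq_\cE$ and that dereliction is the sole subindex-sensitive, and monotone, label operation, with all remaining label arithmetic ($\timesA$, $\glbA$, $w_\ell$) likewise monotone. Your reduction of the first claim to the case $c'=0$ via $[F]_0=F$ is also exactly right.
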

The next definition restricts the appearance of unbounded modalities 
only under linear implication.
\begin{definition}[$\limp$-linear]
We say that $F$ is  $\limp$-linear if for all subformulas of the form $A \limp B$ in $F$, $A$ does not have occurrences 
of $\pnbang{a}$. 
\end{definition}
The following result presents the admissibility of an extended form of the cut rule, where the budget information from the left premise is passed to the cut-formula in the right premise. Observe that the label
of the conclusion is now a function of the labels of the premises. Moreover, the cut-reduction is {\em label preserving}, meaning that the budget monotonically decreases in the cut-elimination process.
\begin{theorem}[$\limp$-linear cut]
The following rule is admissible
\[
\infer[cut_{LL}\mbox{\quad $F$ is a $\limp$-linear formula}]{\pnbang{}\Gamma,\Delta_1,\Delta_2 \lraS{a + b} C}{
 \deduce{\pnbang{}\Gamma,\Delta_1 \lraS{a}F}{}&
 \deduce{\pnbang{}\Gamma,\Delta_2, [F]_a \lraS{b}C}{}&
}
\]
Moreover, if $\Gamma \lraS{a} C$ is provable using $cut_{LL}$, then there is a cut-free proof of 
$\Gamma \lraS{a'}C$ with $a\geq a'$.

\end{theorem}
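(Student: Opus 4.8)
The plan is to establish admissibility by a Gentzen-style cut-elimination argument, by induction on the lexicographic measure $\langle |F|, h_1+h_2\rangle$, where $|F|$ is the logical complexity of the cut formula (note $[F]_a$ has the same complexity as $F$, since the encoding touches only modal subscripts) and $h_1,h_2$ are the heights of the two premise derivations; the dereliction case is absorbed by letting $|F|$ dominate. At each reduction I would check that the step is \emph{label preserving}, that is, the label of the reduct is $\leq a+b$. Granting this, the \emph{Moreover} claim follows by repeatedly eliminating a topmost $cut_{LL}$: each elimination strictly lowers the measure and never raises the end-label, so the procedure terminates in a cut-free proof of $\Gamma\lraS{a'}C$ with $a\geq a'$.

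First I would clear the axiom cases and the commutative cases, where the cut formula is not principal in the last rule of one premise: there the cut permutes upward past that rule, and the label bookkeeping goes through because every labelled rule combines budgets only through the monotone operations $+$ and $\max$, and because the preceding monotonicity Lemma for $[\cdot]_c$ lets me realign encodings when a rule is duplicated. Among the principal cases, the additive connectives and the single-use modality $\vnbang{a}$ reduce exactly as in Theorem~\ref{thm:cutAdm} and in the proof of $cutL$: since $\vnbang{}$ permits no contraction its reduction is essentially linear, and the labels accumulate as $a+b$ with no surplus.

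The first decisive case is $F=A\limp B$, and it is where $\limp$-linearity is used. Because $A$ contains no $\pnbang{a}$, we have $[F]_a=A\limp[B]_a$; the left premise ends in $\limp_R$, giving $\pnbang{}\Gamma,\Delta_1,A\lraS{a}B$, and the right ends in $\limp_L$, giving $\pnbang{}\Gamma,\Delta_2\lraS{b_1}A$ and $\pnbang{}\Gamma,\Delta_3,[B]_a\lraS{b_2}C$ with $b=b_1+b_2$. I would cut $B$ \emph{first}: by the induction hypothesis $cut_{LL}$ applies to the smaller $\limp$-linear formula $B$, and it fits perfectly, since the left produces $B$ at cost exactly $a$ while the right uses exactly $[B]_a$; this yields $\pnbang{}\Gamma,\Delta_1,\Delta_3,A\lraS{a+b_2}C$. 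I would then cut $A$, which is a \emph{linear} cut (no $\pnbang{a}$ in $A$) handled by admissibility of $cutL$, adding $b_1$ and producing the target label $b_1+a+b_2=a+b$. The order is essential: cutting $B$ before $A$ is what makes the production cost $a$ recorded in $[B]_a$ coincide with the cost at which $B$ is actually produced on the left.

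The second decisive case, and what I expect to be the main obstacle, is $F=\pnbang{c}A$. Here the left premise ends in promotion, giving $(\pnbang{}\Gamma,\Delta_1)^{\leqpnk{c}}\lraS{a}A$, while the right uses $[F]_a=\pnbang{c_a}[A]_a$, possibly derelicted many times by contraction. I would push the cut through the right derivation, and at each dereliction splice in the promotion proof of $A$ and invoke the induction hypothesis ($cut_{LL}$ on the smaller $\limp$-linear formula $A$, whose right occurrences are precisely $[A]_a$); the residual copies of $\pnbang{c_a}[A]_a$ are then weakened away at the axioms, which does not raise labels. The encoding $[\cdot]_a$ is exactly what makes this work: a dereliction of $\pnbang{c_a}$ is charged $c+a$, so it \emph{pre-pays} the cost $a$ of re-proving $A$; hence every spliced copy is already accounted for and the conclusion label never exceeds $a+b$, and since $c\geq 0$ we in fact save budget. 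The delicate point is to verify simultaneously that (i) duplicating the $A$-proof across all derelictions keeps the total label $\leq a+b$, which is precisely what the subscript bookkeeping secures, and (ii) the induction still terminates despite the cuts multiplying, which holds because every newly created cut is on the strictly simpler formula $A$, so the dominant component $|F|$ of the measure strictly decreases. Closing these two points, with the monotonicity Lemma absorbing the slack, completes the induction.
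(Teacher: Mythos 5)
Your proposal follows essentially the same route as the paper's proof: a standard Gentzen-style cut elimination with label bookkeeping, where the two decisive principal cases are exactly the ones you isolate --- the implication case, resolved by cutting the consequent first so that the production cost $a$ matches the subscript in $[B]_a$ (with $\limp$-linearity guaranteeing $[A]_c=A$ for the residual antecedent cut), and the $\pnbang{}$ case, where the subscript in $\pnbang{c_a}[A]_a$ pre-pays the cost of re-deriving $A$ at each dereliction, with the monotonicity lemma for $[\cdot]_c$ absorbing the slack. Your global ``push through and splice at every dereliction'' phrasing of the $\pnbang{}$ case and your explicit appeal to $cutL$ for the antecedent cut are only presentational variants of the paper's local two-cut reduction, so there is nothing substantive to add.
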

\begin{proof}
We will illustrate some cases. 
\begin{itemize}
 \item Note that: $[\pnbang{a_b}F]_c = \pnbang{a_{b+c}}[F]_c$;  the promotion of $\pnbang{a_b}F$, bottom-up, results in a context of 
 $\pnbang{}$ formulas (that can be contracted at will);
 and   the dereliction of $\pnbang{a_b}[F]_c$ decreases the budget in $a + b$. Hence, 
  \[
 \infer{\pnbang{}\Gamma,\Delta_1,\Delta_2 \lraS{a + b + 2c+d}{C}}{
   \infer{\pnbang{}\Gamma,\Delta_1 \lraS{c}\pnbang{a_b}F}{(\pnbang{}\Gamma)^\leqpn{a_b}\lraS{c}F}&
   \infer{\pnbang{}\Gamma, \Delta_2, \pnbang{a_{b+c}}[F]_c \lraS{a+b+c+d}C}{\pnbang{}\Gamma, \Delta_2, [F]_{c}, \pnbang{a_{b+c}}[F]_c \lraS{d}C}
 }
 \]
reduces to
\[
   \infer{\pnbang{}\Gamma,\Delta_1,\Delta_2 \lraS{2c+d}C}{
    \deduce{(\pnbang{}\Gamma)^\leqpn{a_b} \lraS{c}F}{} &
     \infer{\pnbang{}\Gamma,\Delta_2, [F]_{c}\lraS{c+d}C}{
      \deduce{\pnbang{}\Gamma \lraS{c} \pnbang{a_b}F}{}&
      \deduce{\pnbang{}\Gamma,\pnbang{a_{b+c}}[F]_c, \Delta_2, [F]_{c}\lraS{d}C}{}
     }
    }
 \]
where the  ``extra cost'' $a_b$ disappears after the reduction. 
 \item Note that $[F\otimes G]_c = [F]_{c} \otimes [G]_c$. Here, let $c = c_1 + c_2$: 
 \[
 \infer{\pnbang{}\Gamma,\Delta_1,\Delta_2 \lraS{b+c}C}{
  \infer{\pnbang{}\Gamma,\Delta_1 \lraS{c} F \otimes G}{
   \deduce{\pnbang{}\Gamma,\Delta_1' \lraS{c_1}F}{} &
   \deduce{\pnbang{}\Gamma,\Delta_1'' \lraS{c_2}G}{} &
  } &
  \infer{\pnbang{}\Gamma,\Delta_2, [F \otimes G]_{c} \lraS{b}C}{\pnbang{}\Gamma,\Delta_2, [F]_c, [G]_c \lraS{b}C}
 }
 \]
 reduces to
\[
 \infer{\pnbang{}\Gamma,\Delta_1,\Delta_2 \lraS{b  +c}C}{
  \deduce{\pnbang{}\Gamma,\Delta_1' \lraS{c_1} F}{} &
  \infer{\pnbang{}\Gamma,\Delta_1 '',\Delta_2, [F]_{c_1} \lraS{b+c_2}C}{
   \deduce{\pnbang{}\Gamma,\Delta_1'' \lraS{c_2} G}{} &
   \deduce{\pnbang{}\Gamma,\Delta_2,[F]_{c_1}, [G]_{c_2} \lraS{b}C}{}
  }
 }
 \]
 It is worth noticing that in the first derivation, the cost $c=c_1 + c_2$ is ``charged'' to  $F\otimes G$ 
(in the formula $[F \otimes G]_c$)
while in the second one, in a  finer way, the cost $c_1$ is charged to $F$ and $c_2$ to $G$. 
 \item The case of implication explains the restriction we impose. Here $b = b_1 + b_2$:
 \[
\infer{\pnbang{}\Gamma,\Delta_1,\Delta_2 \lraS{c + b}C }{
  \infer{\pnbang{}\Gamma,\Delta_1 \lraS{c} F\limp G}{
   \deduce{\pnbang{}\Gamma,\Delta_1,F  \lraS{c} G}{}&
   }&
   \infer{\pnbang{}\Gamma,\Delta_2, [F \limp G]_c \lraS{b }C }{
     \deduce{\pnbang{}\Gamma,\Delta_2' \lraS{b_1}[F]_{c}}{}&
     \deduce{\pnbang{}\Gamma,\Delta_2'', [G]_c\lraS{b_2} C}{}
   }
}
 \]
 reduces to 
 \[
 \infer{\pnbang{}\Gamma,\Delta_1,\Delta_2 \lraS{c + b} C}{
   \deduce{\pnbang{}\Gamma,\Delta_2' \lraS{b_1} F}{}&
   \infer{\pnbang{}\Gamma,\Delta_1,\Delta_2'', [F]_{b_1} \lraS{c + b_2}C}{
    \deduce{\pnbang{}\Gamma,\Delta_1, [F]_{b_1} \lraS{c} G}{} &
    \deduce{\pnbang{}\Gamma,\Delta_2'', [G]_{c} \lraS{b_2} C}{}
   }
 }
 \]
Note that the reduction above is correct since $F$ does not have 
occurrences of $\pnbang{a}$ and then  $[F]_c = [F]_{b_1}=F$. 
\end{itemize}
\end{proof}
This kind of analysis seems to be related with {\em flowgraphs} in \MELL~\cite{DBLP:journals/tcs/Strassburger03,DBLP:journals/tocl/StrassburgerG11}. 
 \label{sec:app}

\end{document}